\newtheorem{theorem}{Theorem}
\newtheorem{corollary}{Corollary}[theorem]
\newtheorem{definition}{Definition}
\newtheorem{observation}{Observation}
\newtheorem{remark}{Remark}
\title{Leadership in Congestion Games:\\ Multiple User Classes and Non-Singleton Actions \\ (Extended Version)}
\author{
Alberto Marchesi$^1$\footnote{Equal Contribution}
\and
Matteo Castiglioni$^{1\ast}$
\And
Nicola Gatti$^1$
\affiliations
$^1$ Politecnico di Milano, Piazza Leonardo da Vinci 32, Milano, Italy 
\emails
\{alberto.marchesi, matteo.castiglioni, nicola.gatti\}@polimi.it
}
\begin{document}

\maketitle

\begin{abstract}  
We study the problem of finding \emph{Stackelberg equilibria} in games with a massive number of players. So far, the only known game instances in which the problem is solved in polynomial time are some particular \emph{congestion games}. However, a complete characterization of hard and easy instances is still lacking. In this paper, we extend the state of the art along two main directions. First, we focus on games where players' actions are made of multiple resources, and we prove that the problem is \textsf{NP}-hard and \emph{not} in Poly-\textsf{APX} unless \textsf{P} = \textsf{NP}, even in the basic case in which players are symmetric, their actions are made of only two resources, and the cost functions are monotonic.
Second, we focus on games with singleton actions where the players are partitioned into \emph{classes}, depending on which actions they have available.
In this case, we provide a dynamic programming algorithm that finds an equilibrium in polynomial time, when the number of classes is fixed and the leader plays pure strategies.
Moreover, we prove that, if we allow for leader's mixed strategies, then the problem becomes \textsf{NP}-hard even with only four classes and monotonic costs. Finally, for both settings, we provide mixed-integer linear programming formulations, and we experimentally evaluate their scalability on both random game instances and worst-case instances based on our hardness reductions.
\end{abstract}

\section{Introduction}\label{sec:introduction}

In the last years, Stackelberg games and their corresponding \emph{Stackelberg equilibria} (SEs) received a lot of attention in the artificial intelligence literature, thanks to their successful applications in real-world settings, \emph{e.g.}, in security domains~\cite{tambe2011security}. 
In a Stackelberg game, there is a special player, called \emph{leader}, who has the ability to commit to a (potentially mixed) strategy beforehand, while the other players, who act as \emph{followers}, decide how to play after observing the commitment~\cite{von2010leadership}.
As most of the works in the literature, see, \emph{e.g.}, \cite{conitzer2006computing,paruchuri2008playing,conitzer2011commitment},
we focus on \emph{optimistic} SEs (OSEs), which assume that the followers break ties in favor of the leader, maximizing her utility.
Few works also analyze \emph{pessimistic} SEs, where the followers break ties so as to minimize the leader's utility.
However, the pessimistic assumption usually results in computational problems that are much harder than their optimistic counterparts, as showed, \emph{e.g.}, by~\citeauthor{basilico2017bilevel}~[\citeyear{basilico2017bilevel}] and~\citeauthor{coniglio2017pessimistic}~[\citeyear{coniglio2017pessimistic,algo2018computing}].

It is well-known that, while finding an OSE in two-player normal-form games is easy~\cite{conitzer2006computing}, the problem becomes much harder in games with multiple followers playing a \emph{Nash equilibrium} (NE).
%
For instance, 
\citeauthor{basilico2019bilevel}~[\citeyear{basilico2016methods,basilico2019bilevel}] show that computing an OSE is \textsf{NP}-hard and not in Poly-\textsf{APX} unless \textsf{P} = \textsf{NP} in normal-form games with only two followers, and \citeauthor{alberto2018computing}~[\citeyear{alberto2018computing}] show the same result for polymatrix games with an arbitrary number of followers.
Moreover, finding an OSE is \textsf{NP}-hard also in normal-form games with multiple followers playing sequentially~\cite{conitzer2006computing}, while the problem becomes easy if they play simultaneously in a correlated manner~\cite{conitzer2011commitment}.
%
%

Recently, \citeauthor{Marchesi18:leadership}~[\citeyear{Marchesi18:leadership}] and \citeauthor{aij2018leadership}~[\citeyear{aij2018leadership}] studied the problem of computing OSEs in \emph{congestion games} (CGs).
In a CG, the players' actions are subsets of a given set of shared resources, and each resource has a cost that depends on the number of players using it (a.k.a. \emph{congestion}).
Most of the computational studies on CGs, see, \emph{e.g.}, \cite{fabrikant2004complexity,ieong2005fast,ackermann2008impact}, focus on finding pure-strategy NEs.
%
Instead, \citeauthor{Marchesi18:leadership}~[\citeyear{Marchesi18:leadership}] and \citeauthor{aij2018leadership}~[\citeyear{aij2018leadership}] apply the Stackelberg paradigm to CGs, assuming that the followers play a pure-strategy NE after observing the leader's mixed-strategy commitment.
These works focus on \emph{singleton} CGs, where actions are made of only one resource.
Specifically, \citeauthor{Marchesi18:leadership}~[\citeyear{Marchesi18:leadership}] show that an OSE can be found in polynomial time when costs are \emph{monotonic} in the congestion and players are \emph{symmetric} (\emph{i.e.}, they share the same set of actions), even when the number of followers is non-fixed.
\citeauthor{aij2018leadership}~[\citeyear{aij2018leadership}] show that the same holds even with non-monotonic costs if we restrict the leader to pure strategies.
%
These results provide some examples of multi-follower Stackelberg games where finding an equilibrium is computationally tractable.
On the other hand, \citeauthor{aij2018leadership}~[\citeyear{aij2018leadership}] also show that, if either the costs are non-monotonic or players are not symmetric, then the problem becomes \textsf{NP}-hard and not in Poly-\textsf{APX}, unless~\textsf{P} = \textsf{NP}.

Identifying new classes of multi-follower Stackelberg games where SEs can be computed in time polynomial in the number of players is crucial for real-world problems, enabling the adoption of SEs in massive applications, such as, \emph{e.g.}, resource-sharing systems with premium (prioritized) users, cybersecuirty with multiple treats, and influence maximization in social networks.
Despite this, a complete characterization of hard and easy game instances is still lacking.


\subsubsection{Original Contributions}

In this paper, we extend the state of the art on the computational problem of finding OSEs in CGs, identifying new cases in which it is solvable in polynomial time and others where it is not.
%
%
First, we show that having actions made of only one resource is necessary to have efficient (polynomial-time) algorithms. 
Indeed, we prove that finding an OSE is \textsf{NP}-hard and not in Poly-\textsf{APX} unless \textsf{P} = \textsf{NP}, even if players' actions contain only two resources, costs are monotonic, and players are symmetric.
Then, we introduce and study singleton CGs in which the players are partitioned into \emph{classes}, with followers of the same class sharing the same set of actions.
These are a generalization of singleton CGs with symmetric players, capturing the common case in which users can be split into (usually few) different classes, such as, \emph{e.g.}, users with different priorities.
For these games, we provide a dynamic programming algorithm that computes an OSE in polynomial time, when the number of classes is fixed and the leader is restricted to play pure strategies.
%
On the other hand, we prove that, if the leader is allowed to play mixed strategies, then the problem becomes \textsf{NP}-hard even with only four classes and monotonic costs.
Finally, for both settings, we design \emph{mixed-integer linear programming} (MILP) formulations for computing OSEs, and we experimentally evaluate them on a testbed containing both randomly generated game instances and worst-case instances based on our hardness reductions.~\footnote{All the proofs of Lemmas and Theorems are in Appendix~\ref{sec:app_a}.}
%

\section{Related works}
\label{sec:related_work}

In this work, we apply the Stackelberg paradigm to CGs following the approach of \citeauthor{von2010leadership}~[\citeyear{von2010leadership}] and \citeauthor{conitzer2006computing}~[\citeyear{conitzer2006computing}], \emph{i.e.}, we treat the leader as a special player who seeks for an optimal (in terms of her utility) \emph{strategy to commit to}.
%
%
In the literature, there are a number of works that apply the Stackelberg paradigm to CGs following different approaches.
Even if these works address settings that are are substantially different from ours, it is worth discussing how their results relate to our work. 

There are some works, such as, \emph{e.g.}, \cite{roughgarden2004stackelberg,swamy2007effectiveness,sharma2009stackelberg,fotakis2010stackelberg}, which study CGs where the leader is an authority whose objective is to minimize the inefficacy (in terms of followers' social welfare) of the NE reached by the followers (\emph{i.e.}, minimize the price of anarchy). 
This setting is fundamentally different form ours, as we assume that the leader looks for a strategy to commit to that minimizes her own cost, while she is not concerned with the maximization of followers' social welfare.
Let us remark that our approach leads to what is usually called OSE, while the Stackelberg strategies analyzed in these works
are not OSEs according to the classical definitions \cite{conitzer2006computing,von2010leadership}.

Moreover, there are other works, such as, \emph{e.g.}, \cite{leme2012curse,de2014sequential,correa2015curse}, which apply the Stackelberg paradigm to CGs following yet another approach.
%
They assume that the players play sequentially in a predefined order, reaching a \emph{subgame perfect equilibrium} (SPE) in the extensive-form extension of the original CG where each player plays after observing the actions performed by the preceding players. 
This is different from our setting in two fundamental ways: {(i)} we assume that the followers play simultaneously, rather than sequentially; and {(ii)} these works study the inefficiency (in terms of followers' social welfare) of SPEs, rather than the computational problem of finding an optimal leader's strategy.
Furthermore, we remark that an OSE is an SPE of a particular extensive-form extension of the original CG, known as mixed extension \cite{von2010leadership}. In this extended game, the leader first commits to a mixed strategy (having a continuum of actions), and, then, the followers observe it and play simultaneously, reaching an NE. 
This is different from the extensive-form extension studied in the work by \citeauthor{leme2012curse}~[\citeyear{leme2012curse}] and its follow-ups, where only pure-strategy commitments are possible and the followers play sequentially.


\section{Preliminaries}\label{sec:preliminaries}

We study \emph{Stackelberg congestion games} (SCGs) with one leader and multiple followers playing an NE in the CG resulting from the leader's commitment.
%
%
An SCG is a tuple $(N,R,A,c_\ell,c_f)$, defined as follows.
$N=F \cup \{ \ell \}$ is a finite set of players, the leader $\ell$ and the followers $p \in F$.
$R$ is a finite set of resources.
$A= \{ A_p \}_{p \in N}$, where $A_p \subseteq 2^R$ is the set of player $p$'s actions, with each action $a_p \in A_p$ being a non-empty subset of resources, \emph{i.e}, $a_p \subseteq R$.
Finally, $c_\ell = \{c_{i,\ell}\}_{i \in R}$ and $c_f = \{c_{i,f}\}_{i \in R}$ are, respectively, the leader's and followers' cost functions, with $c_{i,\ell}, c_{i,f} : \mathbb{N} \rightarrow \mathbb{Q}$ being the costs of resource $i$ as a function of its congestion.
As usual, we assume $c_{i,\ell}(0) = c_{i,f}(0) = 0$ for every $i \in R$.
Moreover, we let $n \coloneqq |N|$ and $r \coloneqq |R|$ be the number of players and resources, respectively.

Let $\sigma_p \in \Delta_p$ be a player $p$'s strategy, which is a probability distribution over actions $A_p$, with $\sigma_p(a_p)$ denoting the probability of playing action $a_p \in A_p$. 
A strategy $\sigma_p \in \Delta_p$ is \emph{pure} when it prescribes to play a single action $a_p \in A_p$ with probability $\sigma_p(a_p) = 1$; otherwise, $\sigma_p$ is \emph{mixed}.
With an abuse of notation, given $\sigma_p \in \Delta_p$, let $\sigma_p(i) \coloneqq \sum_{ a_p \ni i} \sigma_p(a_p)$ be the probability of selecting resource $i \in R$ when $\sigma_p$ is played.
%
%
%
We denote with $\sigma = (\sigma_\ell, a)$ a \emph{strategy profile}, \emph{i.e.}, a tuple of players' strategies, in which the leader plays a (potentially) mixed strategy $\sigma_\ell \in \Delta_\ell$ and the followers play pure strategies that determine a \emph{followers' action profile} $a = (a_p)_{p \in F}$, which belongs to the set $A_F = \bigtimes_{p \in F} A_p$.

%
%

Given a followers' action profile $a \in A_F$, we let $\nu_i^a \coloneqq | \{ p \in F \mid i \in a_p \} |$ be the number of followers selecting resource $i \in R$ in $a$.
%
For ease of presentation, we define the \emph{followers' configuration} induced by $a$ as the vector $\nu^a \in \mathbb{N}^r$ whose $i$-th component is $\nu_i^a$.
Furthermore, for $\sigma_\ell \in \Delta_\ell$, $c_{i,f}^{\sigma_\ell} : \mathbb{N} \rightarrow \mathbb{Q}$ encodes the followers' expected cost of resource $i \in R$ given $\sigma_\ell$, as a function of the number $x$ of followers selecting $i$, \emph{i.e.}, $c_{i,f}^{\sigma_\ell}(x) = \sigma_\ell(i) c_{i,f}(x+1) + (1 - \sigma_\ell(i)) c_{i,f}(x)$.
Intuitively, the followers who select resource $i$ experience a congestion incremented by one whenever the leader chooses $i$, which happens with probability $\sigma_\ell(i)$.
%
When player $p \in N$ plays an action $a_p \in A_p$, she pays the sum of the costs of resources $i \in a_p$.
Thus, given a strategy profile $\sigma = (\sigma_\ell, a)$, $c_{p}^{\sigma} \coloneqq \sum_{i \in a_p} c_{i,f}^{\sigma_\ell}(\nu_i^a)$ is the cost of follower $p \in F$ in $\sigma$,
and $c_{\ell}^{\sigma} \coloneqq \sum_{a_\ell \in A_\ell} \sigma_\ell(a_\ell) \sum_{i \in a_\ell} c_{i,\ell}(\nu_i^a+1)$ is the leader's one.

In an SCG, after observing a leader's strategy $\sigma_\ell \in \Delta_\ell$, the followers play a new CG with resource costs $c_{i,f}^{\sigma_\ell}$, for $i \in R$.
Specifically, we assume that the followers play an NE in pure strategies~\cite{nash1951non}.
Formally, given $\sigma=(\sigma_\ell,a)$, $a$ is a (pure-strategy) NE for $\sigma_\ell$ if, for every $p \in F$ and $a_p' \in A_p$, $c_{p}^{\sigma} \leq c_{p}^{\sigma'}$, where $\sigma' = (\sigma_\ell, a')$ and $a' \in A_F$ is obtained from $a$ by replacing $a_p$ with $a_p'$.
In words, in an NE, there is no follower who has an incentive to unilaterally deviate from $a_p$ by playing another action $a_p'$. 
Given $\sigma_\ell \in \Delta_\ell$, we let $E^{\sigma_\ell}$ be the set of NEs in the followers' game for $\sigma_\ell$.
\footnote{Let us remark that we can safely restrict the attention to pure-strategy NEs, as one is guaranteed to exist in every CG~\cite{rosenthal1973class}.
Moreover, it is natural to assume that the followers will end up playing pure-strategy NEs, since they are reachable by \emph{best-response dynamics}~\cite{monderer1996potential}, where players continuously change their actions, one at a time, playing a best response to the current players' configuration.
}

Different subclasses of SCGs can be defined by making additional assumptions on their elements.
For instance, one possibility is to require that players' cost functions be \emph{(weakly) monotonic} in the resource congestion, \emph{i.e.}, for every resource $i \in R$, it must be the case that $c_{i,\ell}(x) \leq c_{i,\ell}(x+1)$ and $c_{i,f}(x) \leq c_{i,f}(x+1)$ for all $x \in \mathbb{N}$.

Another possibility is to restrict the structure of the players' action sets $A_p$.
Along this direction, we address \emph{Stackelberg singleton congestion games} (SSCGs), which are SCGs with players' actions required to be singletons, \emph{i.e.}, $|a_p|=1$ for every $p \in N$ and $a_p \in A_p$.
Thus, when studying such games, we identify actions with resources, assuming $A_p \subseteq R$ for all $p \in N$.
We also use $\sigma_\ell \in \Delta_\ell$ as if it were directly defined over resources in $A_\ell$, with $\sigma_\ell(i)$ being the probability of playing resource $i \in A_\ell$.
Moreover, given $\sigma = (\sigma_\ell, a)$, we have $c_{p}^{\sigma} = c_{a_p,f}^{\sigma_\ell}(\nu_{a_p}^a)$ and $c_{\ell}^{\sigma} = \sum_{i \in A_\ell} \sigma_\ell(i) c_{i,\ell}(\nu_i^a+1)$.
%

Finally, we can consider different kinds of players' structures.
Here, we focus on two notable cases.
In the first one, all players share the same set of actions, \emph{i.e.}, $A_p = A \subseteq 2^R$ for all $p \in N$.
We refer to these games as \emph{symmetric}.
Instead, in the second case, there exists a finite set $\mathcal{T} = \{ 1, \ldots, T \}$ of \emph{followers' classes}, with followers of the same class sharing the same set of actions. 
We say that these games are \emph{$\mathcal{T}$-class}.
Specifically, in a $\mathcal{T}$-class SCG, we can partition the followers into $T$ disjoint sets, \emph{i.e.}, $F = \bigcup_{t \in \mathcal{T}} F_t$, so that, for each $t \in \mathcal{T}$, $A_p = A_t \subseteq 2^R$ for all $p \in F_t$.
We also let $|F_t|=n_t$ be the number of followers of class $t \in \mathcal{T}$.
When studying these games, given a followers' action profile $a \in A_F$, we let $\nu_{t,i}^a \coloneqq |\{ p \in F_t \mid i \in a_p \}|$ be the number of followers of class $t \in \mathcal{T}$ selecting resource $i \in R$ in $a$.
Moreover, we define the \emph{followers' configuration of class $t$} induced by $a$ as the vector $\nu_t^a \in \mathbb{N}^r$ whose $i$-th component is $\nu_{t,i}^a$.
\footnote{Let us remark that symmetric SCGs are a special case of $\mathcal{T}$-class SCGs with only one class, \emph{i.e.}, $\mathcal{T}=\{ 1 \}$, and leader's action set equal to the followers', \emph{i.e.}, $A_\ell = A_1$.}

Observe that $\mathcal{T}$-class SSCGs
can be fully analyzed using followers' configurations, rather than action profiles.
This is because only the number of followers of each class selecting each resource is significant, and,
thus, a followers' action profile $a \in A_F$ can be equivalently represented with the followers' configurations $\{ \nu_t^a \}_{t \in \mathcal{T}}$.
%
Thus, we can directly use the vector $\nu_t \in \mathbb{N}^r$ with $\sum_{i \in R} \nu_{t,i} = n_t$ to denote a followers' configuration of class $t\in \mathcal{T}$.
Moreover, given $\{ \nu_t \}_{t \in \mathcal{T}}$, we let $\nu \in \mathbb{N}^r$ be such that $\nu_i \coloneqq \sum_{t \in \mathcal{T}} \nu_{t,i}$ for $i \in R$.
Notice that, given a strategy profile $\sigma =(\sigma_\ell, \{ \nu_t \}_{t \in \mathcal{T}})$, $\{ \nu_t \}_{t \in \mathcal{T}}$ is an NE for $\sigma_\ell $ if, for every class $t \in \mathcal{T}$, and resources $i \in A_t : \nu_{t,i} > 0$ and $j \in A_t$, it holds $c_{i,f}^{\sigma_\ell}(\nu_i) \leq c_{j,f}^{\sigma_\ell}(\nu_j + 1)$.
%
%
%

In conclusion, in an SCG,
%
an OSE prescribes the leader an optimal strategy to commit to, assuming that the followers play a pure-strategy NE minimizing her cost in the CG resulting from the commitment.
Formally, $\sigma = (\sigma_\ell, a)$ is an OSE if it solves the problem:
$\min_{\sigma_\ell' \in \Delta_\ell} \min_{a' \in E^{\sigma_\ell'}} c_\ell^{\sigma' = (\sigma_\ell',a')}$.
%

\section{Complexity Results on SCG}\label{sec:congestion_games}

First, we study the problem of computing an OSE in SCGs, and we prove it is intractable even when the game is symmetric, cost functions are monotonic, and players' actions are made of only two resources.
Thus, non-singleton actions make the problem considerably harder than in the singleton case, which admits a polynomial-time algorithm in symmetric games with monotonic costs~\cite{Marchesi18:leadership}.

In particular, we show that the problem is \textsf{NP}-hard and not in Poly-\textsf{APX} unless \textsf{P} = \textsf{NP}, \emph{i.e.}, the leader's cost in an OSE cannot be efficiently approximated up to any polynomial factor in the size of the input.
Our results are based on a reduction from \textsc{3SAT}, a well-know \textsf{NP}-complete problem~\cite{garey1979computers}.
%
Intuitively, given $\epsilon > 0$, we map any \textsc{3SAT} instance to an SCG that admits an OSE $\sigma$ with $c_\ell^\sigma = \epsilon$ if and only if \textsc{3SAT} is satisfiable, otherwise $c_\ell^\sigma = 1$.


\begin{restatable}{theorem}{theoremone}\label{thm:hard_general_symm}
	Computing an OSE in symmetric SCGs is \textsf{NP}-hard, even when cost functions are monotonic and players' actions have cardinality at most two.
\end{restatable}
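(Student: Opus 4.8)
The plan is to give a polynomial reduction from \textsc{3SAT} to the problem of computing an OSE in a symmetric SCG with monotonic costs and actions of cardinality at most two, constructed so that the leader can achieve cost $\epsilon$ if and only if the given \textsc{3SAT} formula is satisfiable, and otherwise any OSE has cost $1$. Fix a \textsc{3SAT} instance with variables $x_1,\dots,x_m$ and clauses $C_1,\dots,C_k$. I would build a resource set containing, for each variable $x_j$, a pair of ``literal'' resources (one for $x_j$ and one for $\lnot x_j$), together with some auxiliary/clause resources and a distinguished ``leader-penalty'' resource. The crux of the encoding is to use the \emph{two-resource actions} to chain a truth assignment to the followers' NE: because each action now bundles two resources, a follower's choice can be forced to simultaneously commit to a literal and to ``touch'' a clause or gadget resource, something impossible with singletons. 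The symmetric action set $A$ is shared by all players (leader and followers), and I would choose the number of followers and the cost functions so that in every pure-strategy NE the followers' configuration encodes a consistent assignment: selecting the $x_j$-resource precludes (via the induced expected costs $c^{\sigma_\ell}_{i,f}$) selecting the $\lnot x_j$-resource in equilibrium, enforcing that exactly one literal per variable is ``active.''

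The key mechanism I would exploit is the leader's probabilistic influence on followers' costs through the term $c^{\sigma_\ell}_{i,f}(x)=\sigma_\ell(i)\,c_{i,f}(x+1)+(1-\sigma_\ell(i))\,c_{i,f}(x)$. By committing mass to certain resources, the leader can shift the followers' equilibrium among a family of assignment-encoding configurations; the leader's own cost, given by $\sum_{a_\ell\in A_\ell}\sigma_\ell(a_\ell)\sum_{i\in a_\ell}c_{i,\ell}(\nu^a_i+1)$, is set up to pay the high penalty $1$ unless the followers reach a configuration corresponding to a \emph{satisfying} assignment, in which case the leader's induced cost on the penalty resource collapses to $\epsilon$. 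I would design the clause gadgets so that a clause resource stays ``uncovered'' (and hence cheap for the leader) precisely when at least one of its three literals is active in the followers' configuration, realizing the clause's disjunction through monotonicity of the costs; all cost functions are piecewise-constant step functions chosen to be weakly monotonic, which keeps us inside the monotonic, symmetric, cardinality-two regime claimed in the statement.

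The two directions of correctness are then: (soundness) from any satisfying assignment, exhibit a leader commitment and a corresponding followers' NE whose configuration encodes that assignment, verify the NE condition $c^{\sigma_\ell}_{i,f}(\nu_i)\le c^{\sigma_\ell}_{j,f}(\nu_j+1)$ resource-by-resource, and check that the leader's cost is $\epsilon$; (completeness/inapproximability) show that if the formula is unsatisfiable, then \emph{every} leader commitment and \emph{every} induced optimistic NE yields leader cost exactly $1$, so the optimal value jumps from $\epsilon$ to $1$. Since $\epsilon>0$ is an arbitrary parameter of the construction, a polynomial-factor approximation of the leader's optimal cost would let us distinguish $\epsilon$ from $1$ and thus decide \textsc{3SAT}, giving the not-in-Poly-\textsf{APX} conclusion under $\mathsf{P}\neq\mathsf{NP}$.

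The hard part will be the gadget design and the NE verification: I must simultaneously enforce \emph{consistency} (at most one literal per variable active) and \emph{clause satisfaction} purely through monotonic, cardinality-two cost structures, while ensuring that the optimistic tie-breaking cannot ``cheat'' by selecting a spurious configuration that avoids the penalty without corresponding to a genuine satisfying assignment. Controlling the optimistic NE so that it is forced into the assignment-encoding family, and ruling out degenerate equilibria, is where the delicate balancing of the step-function costs and follower counts $n_t$ per gadget must be done carefully; everything else is routine counting and case analysis.
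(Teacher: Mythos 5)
There is a genuine gap: what you have written is a plan, not a proof, and the entire substance of an \textsf{NP}-hardness argument of this kind lies precisely in the part you defer (``the hard part will be the gadget design and the NE verification''). You never specify the action set, the follower counts, or the cost functions, so neither the soundness nor the completeness direction can actually be checked. For comparison, the paper's reduction is concrete: for each variable $u$ it introduces resources $r_u, r_{\bar u}, r_{u,t}$ and the two actions $a_u=\{r_u,r_{u,t}\}$, $a_{\bar u}=\{r_{\bar u},r_{u,t}\}$ (the shared resource $r_{u,t}$, whose cost jumps from $3$ to $5$ at congestion $2$, is what enforces your ``exactly one literal per variable is active''); for each clause $\phi$ and literal $l\in\phi$ it adds the action $a_{\phi,l}=\{r_\phi,r_l\}$; and it adds one singleton action $a_w=\{r_w\}$ with $c_{r_w,f}(1)=\epsilon$. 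With one follower per variable and one per clause, a counting argument shows that in any NE leaving $r_w$ empty, each $r_{u,t}$ and each $r_\phi$ carries exactly one follower, from which a satisfying assignment is read off. Without an explicit construction of this sort, your claim that monotonic step costs can ``simultaneously enforce consistency and clause satisfaction'' is exactly the statement to be proved, not an argument for it.

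Moreover, your proposed key mechanism points in a direction that is different from what makes the reduction work, and arguably harder. You center the argument on the leader's probabilistic influence on followers' costs via $c^{\sigma_\ell}_{i,f}(x)=\sigma_\ell(i)c_{i,f}(x+1)+(1-\sigma_\ell(i))c_{i,f}(x)$, with the leader ``shifting'' followers among assignment-encoding equilibria by committing mass to resources. In the paper's construction the leader's influence on followers plays essentially no role: in the OSE the leader commits to the \emph{pure} strategy $a_w$ and simply hides alone on the singleton resource $r_w$ (this is why actions of cardinality \emph{at most} two, rather than exactly two, matter --- the leader needs one cheap singleton); the followers' pure-strategy packing into the gadget resources does all the encoding, and indeed the paper observes that the hardness holds even when the leader is restricted to pure strategies. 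Mixing-based gadgets of the kind you describe are what the paper needs for the much more delicate $\mathcal{T}$-class SSCG hardness (the $K$-\textsc{Partition} reduction), and controlling followers' equilibria through the leader's marginals while also ruling out degenerate optimistic equilibria is substantially more intricate than the pure-commitment argument; you give no indication of how to carry it out. Finally, a small but real flaw: for the inapproximability corollary you claim the unsatisfiable case forces leader cost \emph{exactly} $1$; what one actually shows (and all that is needed) is a lower bound of $1$, since pinning down the exact optimal value in the unsatisfiable case is neither necessary nor obviously true.
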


By letting $\epsilon = \frac{1}{2^I}$, where $I$ is the game size, the reduction used for Theorem~\ref{thm:hard_general_symm} also shows the following:

\begin{restatable}{theorem}{theoremtwo}\label{thm:inapx_general_symm}
	The problem of computing an OSE in symmetric SCGs is not in Poly-\textsf{APX} unless \textsf{P} = \textsf{NP}, even when costs are monotonic and players' actions have cardinality two.
\end{restatable}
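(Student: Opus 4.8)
The plan is to reuse verbatim the reduction constructed for Theorem~\ref{thm:hard_general_symm}, treating the parameter $\epsilon$ as a free knob that controls the size of the gap between the satisfiable and unsatisfiable cases. Recall that that reduction maps any \textsc{3SAT} instance to a symmetric SCG (with monotonic costs and actions of cardinality two) whose optimal OSE cost equals $\epsilon$ when the formula is satisfiable and equals $1$ otherwise. The key observation is that this is already a \emph{gap} reduction: the ratio between the two possible optimal values is exactly $1/\epsilon$, and nothing in the construction forces $\epsilon$ to be a constant. By taking $\epsilon$ exponentially small we can make this ratio exceed any prescribed polynomial in the input size, which is precisely what rules out membership in Poly-\textsf{APX}.

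Concretely, I would argue by contradiction. Suppose that computing an OSE in symmetric SCGs admits a polynomial-time approximation algorithm with ratio $\rho(I)$, where $\rho$ is a polynomial and $I$ denotes the encoding size of the game. Running this algorithm on the game produced by the reduction returns a value at most $\rho(I)\,\epsilon$ when \textsc{3SAT} is satisfiable (since the true optimum is $\epsilon$), and at least $1$ when it is unsatisfiable (since the true optimum is $1$ and no feasible solution can do better). Hence, whenever $\rho(I)\,\epsilon < 1$, the returned value lies strictly below $1$ exactly in the satisfiable case, so thresholding it at $1$ decides \textsc{3SAT} in polynomial time, forcing \textsf{P} = \textsf{NP}.

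It remains to choose $\epsilon$ so that $\rho(I)\,\epsilon < 1$ while keeping $I$ polynomial in the \textsc{3SAT} instance size $m$, and this is where the main---though ultimately routine---obstacle lies: the choice of $\epsilon$ and the game size $I$ are mutually dependent, since encoding $\epsilon$ as a rational number contributes $\Theta(\log(1/\epsilon))$ bits to $I$. Setting $\epsilon = 2^{-I}$ (equivalently, fixing the number of bits used to write $\epsilon$ to be a sufficiently large polynomial in $m$) resolves the apparent circularity: the non-$\epsilon$ part of the construction has size polynomial in $m$, so $I$ stays polynomial in $m$, yet $1/\epsilon = 2^{I}$ dominates $\rho(I)$ for any fixed polynomial $\rho$ and all large $I$, yielding $\rho(I)\,\epsilon < 1$ as required. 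The only additional care needed is to confirm that the correctness argument of Theorem~\ref{thm:hard_general_symm}---namely that the optimal OSE cost is exactly $\epsilon$ or exactly $1$---carries over unchanged for this exponentially small $\epsilon$; it does, because that argument never exploits the magnitude of $\epsilon$ beyond $\epsilon \in (0,1)$.
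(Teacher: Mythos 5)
Your skeleton is exactly the paper's: reuse the reduction of Theorem~\ref{thm:hard_general_symm}, set $\epsilon = 2^{-I}$, and threshold the output of a hypothetical polynomial-ratio approximation algorithm at $1$; your handling of the circularity between $\epsilon$ and the encoding size $I$ is also fine. However, there is a genuine gap at the core of your argument: you take as given that the reduction produces a game whose optimal OSE cost is $1$ whenever the \textsc{3SAT} instance is unsatisfiable. What Theorem~\ref{thm:hard_general_symm} actually proves is only the equivalence ``some OSE $\sigma$ has $c_\ell^\sigma = \epsilon$ iff $(C,U)$ is satisfiable.'' Its contrapositive rules out the \emph{value} $\epsilon$ in the unsatisfiable case, but it does not bound the optimum away from $\epsilon$; a priori an OSE could have cost strictly between $\epsilon$ and $1$. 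This is not a pedantic worry, because the leader may commit to a \emph{mixed} strategy: committing to $a_w$ with probability $p$ and to some two-resource action with probability $1-p$ yields an expected cost of the form $p\,\epsilon + (1-p)\,c$ with $c \geq 1$, which for suitable $p$ and $c$ lies strictly inside $(\epsilon, 1)$. Your thresholding step needs the implication ``unsatisfiable $\Rightarrow$ every OSE has cost $\geq 1$,'' and that implication is precisely what must be proven; your closing remark only verifies that the magnitude of $\epsilon$ is immaterial, which is the easy part, while mischaracterizing Theorem~\ref{thm:hard_general_symm} as already giving the $\epsilon$-versus-$1$ dichotomy.

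Closing this gap is the actual substance of the paper's proof of Theorem~\ref{thm:inapx_general_symm}. Assume $(C,U)$ is unsatisfiable and suppose some OSE $\sigma = (\sigma_\ell, a)$ has $c_\ell^\sigma < 1$. Then necessarily $\sigma_\ell(a_w) > 0$ and $\nu_{r_w}^a = 0$, since otherwise every component of the leader's cost is already at least $1$. With $\nu_{r_w}^a = 0$, any follower can deviate to $a_w$ at expected cost at most $4$, so in the NE every follower's cost is at most $4$; this forces $\nu_{r_{u,t}}^a \leq 1$ for all $u \in U$ and $\nu_{r_\phi}^a \leq 1$ for all $\phi \in C$ (congestion $2$ on those resources costs $5$). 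Counting the $|C|+|U|$ followers, all these congestions are exactly $1$, and the truth-assignment extraction from the ``only if'' part of Theorem~\ref{thm:hard_general_symm} then yields a satisfying assignment, a contradiction. Note that this is an adaptation, not a verbatim reuse, of the earlier argument: there the hypothesis $c_\ell^\sigma = \epsilon$ gave $\sigma_\ell(a_w) = 1$, whereas here one only gets $\sigma_\ell(a_w) > 0$ and must redo the deviation bounds under a mixed commitment. Inserting this lemma (unsatisfiable $\Rightarrow$ optimum $\geq 1$) makes your proof complete; the rest of your argument then goes through as written.
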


In conclusion, we provide some side results that deepen our analysis on how non-singleton actions impact on the complexity of finding an OSE in SCGs.
The following theorem shows that our intractability results hold even in SCGs where only the followers have non-singleton actions.
%
%

\begin{restatable}{theorem}{theoremthree}\label{thm:inapx_general_follower_non_sing}
	The problem of computing an OSE in SCGs is \textsf{NP}-hard and not in Poly-\textsf{APX} unless \textsf{P} = \textsf{NP}, even when leader's actions are singletons, costs are monotonic, and followers are symmetric with actions of cardinality at most two.
\end{restatable}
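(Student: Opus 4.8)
The plan is to \emph{adapt} the \textsc{3SAT} reduction underlying Theorem~\ref{thm:hard_general_symm}. In that construction the game is symmetric, so the leader inherits the followers' cardinality-two action set; the task here is to show that these non-singleton leader actions are inessential, \emph{i.e.}, that the same intractability survives once the leader is confined to singletons while the followers stay symmetric with cardinality-two actions and monotonic costs. First, I would isolate exactly how the leader's commitment enters the reduction: the only thing that matters for the induced followers' game is the vector of marginals $\sigma_\ell(i)$ that the commitment places on each resource $i \in R$, through the expected follower costs $c_{i,f}^{\sigma_\ell}$. I would then pin down which of these marginals the reduction genuinely needs in order to make satisfying truth assignments correspond to cheap leader commitments.

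Second, I would rebuild the leader's action set out of singletons over precisely those resources, breaking the leader/follower symmetry. Since a distribution over singletons obeys $\sum_{i} \sigma_\ell(i) \le 1$, the leader now has \emph{strictly less} power to inflate resource congestions than a player equipped with cardinality-two actions; to compensate, I would introduce auxiliary resources (and, if needed, auxiliary symmetric followers with cardinality-two actions) that renormalize the available probability mass and heavily penalize any commitment outside the intended pattern, so that a singleton leader can reproduce exactly the marginals required by the reduction and gains nothing from any other behaviour.

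Third, I would re-establish the two-way correspondence, mirroring the original argument: \emph{(soundness)} if the formula is satisfiable, exhibit a singleton-leader commitment inducing a followers' NE of leader cost $\epsilon$; \emph{(completeness)} if it is unsatisfiable, show that under \emph{every} singleton-leader commitment every followers' NE costs the leader $1$. As in the passage from Theorem~\ref{thm:hard_general_symm} to Theorem~\ref{thm:inapx_general_symm}, setting $\epsilon = 2^{-I}$ with $I$ the instance size then upgrades \textsf{NP}-hardness to non-membership in Poly-\textsf{APX} unless \textsf{P} = \textsf{NP}.

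The main obstacle is the completeness direction under the weakened leader. Because singleton commitments can no longer cover two resources simultaneously with high probability, I must verify that this restriction neither hands the leader a cheap escape in the unsatisfiable case nor destroys the cheap commitment in the satisfiable one; getting both to hold is where the auxiliary gadget and the precise shape of the monotonic cost functions do the real work, and checking that the newly introduced resources do not create unintended followers' NEs is the most delicate part of the verification.
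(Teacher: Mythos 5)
There is a genuine gap, and it stems from missing the one observation that makes this theorem almost immediate. In the reduction behind Theorem~\ref{thm:hard_general_symm}, the leader's optimal commitment is already a \emph{pure strategy on a singleton action}: the game contains the action $a_w = \{r_w\}$, the ``if'' direction constructs an OSE with $\sigma_\ell(a_w)=1$, and the ``only if''/inapproximability arguments show that any commitment giving the leader cost below $1$ must concentrate on $a_w$ (every other action contains two resources and costs at least $1$). Consequently, the paper proves Theorem~\ref{thm:inapx_general_follower_non_sing} in one line: take the same game and simply set $A_\ell = \{a_w\}$. The followers remain symmetric with cardinality-two actions and monotonic costs, the leader's actions are singletons, and both directions of the proofs of Theorems~\ref{thm:hard_general_symm} and~\ref{thm:inapx_general_symm} go through verbatim --- shrinking $A_\ell$ removes only commitments that the ``if'' direction never used, and it can only help the ``only if'' direction, since fewer leader strategies need to be ruled out.

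Your proposal instead treats the singleton restriction as a genuine weakening that must be ``compensated'' by auxiliary resources and auxiliary followers that ``renormalize the available probability mass.'' This is a red herring: the reduction never requires the leader to realize any nontrivial marginal vector $(\sigma_\ell(i))_{i \in R}$, let alone to cover two resources simultaneously, so no compensating gadget is needed. Worse, the proposal leaves this gadget entirely unspecified while acknowledging that the completeness direction --- exactly the part where such a gadget could break the correspondence or introduce unintended followers' equilibria --- is unresolved. As written, the argument therefore does not constitute a proof: the construction on which everything hinges is absent, and the difficulty it is meant to overcome does not actually exist in the instance family you start from. The fix is to inspect where the leader's commitment enters the original proof (your own first step), notice that it enters only through the pure singleton commitment $a_w$, and conclude by restricting $A_\ell$ accordingly.
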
 


Since the OSEs of the games used in our reduction prescribe the leader to play a pure strategy,
we have that:

\begin{observation}
	The results in Theorems~\ref{thm:hard_general_symm},~\ref{thm:inapx_general_symm},~and~\ref{thm:inapx_general_follower_non_sing} hold even if we restrict the leader to play pure strategies.
\end{observation}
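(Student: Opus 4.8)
The plan is to avoid constructing any new reduction and instead to reuse verbatim the three reductions already built for Theorems~\ref{thm:hard_general_symm},~\ref{thm:inapx_general_symm},~and~\ref{thm:inapx_general_follower_non_sing}. Each of these maps a \textsc{3SAT} instance to an SCG whose optimal leader's cost in an OSE is $\epsilon$ when the formula is satisfiable and $1$ when it is not, and, crucially, the OSE exhibited in the satisfiable (completeness) direction prescribes the leader a \emph{pure} strategy. The goal is to show that the very same family of instances preserves this satisfiable/unsatisfiable gap when the leader's feasible set is shrunk from $\Delta_\ell$ to the set of its pure strategies (the vertices of $\Delta_\ell$), so that the hardness and inapproximability transfer with no change to the gadgets.

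First I would isolate a purely monotonicity-based observation. Since the leader solves $\min_{\sigma_\ell' \in \Delta_\ell} \min_{a' \in E^{\sigma_\ell'}} c_\ell^{\sigma'=(\sigma_\ell',a')}$, and the pure strategies form a subset of $\Delta_\ell$, restricting the outer minimization to pure $\sigma_\ell'$ can only enlarge its value. Hence, writing $v$ for the unrestricted (mixed) optimum and $v^{p}$ for the optimum under the pure-strategy restriction, we always have $v^{p} \ge v$, irrespective of the inner structure (the inner set $E^{\sigma_\ell'}$ is never empty, since pure-strategy NEs always exist in CGs). In particular, in the unsatisfiable case this immediately yields $v^{p} \ge v = 1$.

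Next I would handle the satisfiable case with an explicit witness supplied by the asserted property of the reductions: a \emph{pure} leader strategy $\sigma_\ell$ together with a followers' NE $a \in E^{\sigma_\ell}$ attaining $c_\ell^{(\sigma_\ell,a)} = \epsilon$. Because $E^{\sigma_\ell}$ depends only on $\sigma_\ell$ and the same pure $\sigma_\ell$ is reused, this pair stays feasible after the restriction, giving $v^{p} \le \epsilon$; combined with $v^{p} \ge v = \epsilon$ from the monotonicity step, we get $v^{p} = \epsilon$. Thus the map still sends satisfiable formulas to instances with pure optimum $\epsilon$ and unsatisfiable formulas to instances with pure optimum at least $1$, exactly reproducing the original gap. \textsf{NP}-hardness follows by distinguishing $\epsilon$ from a value $\ge 1$, and non-membership in Poly-\textsf{APX} follows because, with $\epsilon = \tfrac{1}{2^I}$ and $I$ the game size, any polynomial-factor approximation $\rho$ would satisfy $\rho\,\epsilon < 1$ for large $I$ and hence separate the two cases, thereby deciding \textsc{3SAT}.

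The one genuine obstacle, and the only place where the argument touches the constructions, is verifying the claim that the completeness-direction OSE is pure in each of the three reductions; everything else is the gadget-agnostic monotonicity-plus-witness argument above. I would therefore revisit the three constructions and check, case by case, that the leader's committed strategy used in the satisfiable direction places probability one on a single action, so that the witness invoked in the satisfiable-case step is genuinely a pure strategy and the transfer goes through uniformly for all three theorems.
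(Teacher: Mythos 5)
Your proposal is correct and is essentially the paper's own argument: the paper justifies the observation with the single remark that the OSEs exhibited in its reductions prescribe a pure leader strategy, which is exactly what your restriction-monotonicity-plus-pure-witness decomposition formalizes. The purity check you flag as the remaining obstacle does pass: in the satisfiable direction of Theorem~\ref{thm:hard_general_symm} the witness commitment is $\sigma_\ell(a_w)=1$, the same construction underlies Theorem~\ref{thm:inapx_general_symm}, and in Theorem~\ref{thm:inapx_general_follower_non_sing} the leader has the single action $A_\ell=\{a_w\}$, so every leader strategy is trivially pure.
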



Finally, let us consider the case in which only the leader has non-singleton actions.
We have the following observation.

\begin{observation}
	In SCGs with symmetric followers having singleton actions, an OSE can be found in polynomial time if we restrict the leader to play pure strategies.~\footnote{ We leave as an open problem the study of the computational complexity when the leader is allowed to play mixed strategies.}
\end{observation}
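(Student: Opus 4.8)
The plan is to exploit the fact that, with the leader restricted to pure strategies, her commitment reduces to selecting a single action $a_\ell \in A_\ell$, and $A_\ell$ is part of the input (its actions are listed explicitly, as in the game representation of Section~\ref{sec:preliminaries}). Hence we can enumerate every leader's pure action $a_\ell \in A_\ell$ in polynomial time, and for each of them compute the followers' NE that is optimal for the leader; returning the overall best pair $(a_\ell,a)$ yields an OSE by definition. The core of the argument is therefore the following subproblem: given a \emph{fixed} pure commitment $a_\ell$, find a followers' NE minimizing the leader's cost.

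First, I would observe that committing to the pure action $a_\ell$ turns the followers' game into a symmetric singleton CG with modified resource costs $c_{i,f}^{a_\ell}$, where $\sigma_\ell(i)=1$ iff $i \in a_\ell$, so that $c_{i,f}^{a_\ell}(x)=c_{i,f}(x+1)$ for $i \in a_\ell$ and $c_{i,f}^{a_\ell}(x)=c_{i,f}(x)$ otherwise. Since the followers are symmetric singletons with common action set $A \subseteq R$, a followers' action profile is fully described by a configuration $\nu \in \mathbb{N}^r$ with $\nu_i=0$ for $i \notin A$ and $\sum_{i \in A}\nu_i=|F|$, and, by the NE characterization in the preliminaries, $\nu$ is an NE iff $c_{i,f}^{a_\ell}(\nu_i) \le c_{j,f}^{a_\ell}(\nu_j+1)$ for all $i \in A$ with $\nu_i>0$ and all $j \in A$. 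The objective to minimize is the leader's cost $\sum_{i \in a_\ell} c_{i,\ell}(\nu_i+1)$.

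The main obstacle is that this NE condition couples the resources through a common threshold, so the per-resource choices of $\nu_i$ cannot be optimized independently. To break the coupling I would guess the equilibrium threshold $\lambda \coloneqq \max_{i : \nu_i > 0} c_{i,f}^{a_\ell}(\nu_i)$, the largest cost experienced by any follower at the NE. There are only $O(|F|\,r)$ candidate values for $\lambda$, namely the numbers $c_{i,f}^{a_\ell}(x)$ for $i \in A$ and $x \in \{1,\ldots,|F|\}$, so we can iterate over all of them. Once $\lambda$ is fixed, the NE condition decomposes: a configuration whose maximum follower cost equals $\lambda$ is an NE iff, \emph{independently} for each $i \in A$, the value $\nu_i$ satisfies the ``no-deviation'' constraint $c_{i,f}^{a_\ell}(\nu_i+1) \ge \lambda$ together with the ``used-resource'' constraint $\nu_i=0$ or $c_{i,f}^{a_\ell}(\nu_i) \le \lambda$, since then $c_{i,f}^{a_\ell}(\nu_i) \le \lambda \le c_{j,f}^{a_\ell}(\nu_j+1)$ for every used $i$ and every $j$.

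Finally, given $\lambda$ and the resulting per-resource feasible sets of values $\nu_i$, I would find the leader-optimal NE by a knapsack-style dynamic program over resources and number of placed followers: the state records the number of resources already processed and the number of followers assigned so far, and stores the minimum leader's cost achievable; the transition for resource $i$ ranges over its feasible values $\nu_i$, adding $c_{i,\ell}(\nu_i+1)$ to the objective whenever $i \in a_\ell$. Requiring exactly $|F|$ followers to be placed, and at least one resource to attain cost $\lambda$ so the guess is tight, selects a valid leader-optimal NE for the threshold $\lambda$. Taking the best solution over all thresholds $\lambda$ and all commitments $a_\ell$ gives an OSE, and all three loops are polynomial, so the whole procedure runs in polynomial time. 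The only point requiring care is the tightness of the threshold guess, which is handled by enforcing that some used resource attains cost exactly $\lambda$ (equivalently, by iterating over all candidate $\lambda$ and keeping the globally cheapest feasible solution).
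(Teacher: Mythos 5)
Your proposal is correct, and its outer structure---enumerating the leader's pure commitments $a_\ell \in A_\ell$ and, for each, computing a followers' NE minimizing the leader's cost in the induced symmetric singleton CG with shifted costs $c_{i,f}^{a_\ell}$---is exactly the paper's argument. Where you diverge is in how the inner optimal-NE problem is solved: the paper dispatches it in one line by citing the dynamic programming algorithm of Ieong et al.\ (2005) (the same algorithm the paper later extends to $\mathcal{T}$-class games in Lemma~\ref{lem:dp}), which threads the equilibrium conditions through the DP state by recording, for each partial solution, the largest experienced cost and the smallest deviation cost. You instead guess the equilibrium threshold $\lambda$ upfront (only $O(nr)$ candidates), observe that for a fixed $\lambda$ the NE conditions decouple into per-resource feasibility constraints, and then run a plain knapsack DP over pairs (resources processed, followers placed) with no equilibrium bookkeeping in the state. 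Both routes are polynomial; yours is self-contained and more elementary, and your soundness/completeness argument for the threshold guessing is handled correctly---every per-$\lambda$ feasible configuration is an NE, and every NE is feasible for its own threshold, so taking the cheapest feasible solution over all candidate $\lambda$ (and all $a_\ell$) yields an OSE even without enforcing that $\lambda$ is attained. The trade-off is that the paper's citation-based route generalizes immediately to the multi-class setting of Theorem~\ref{thm:dyn}, where a single scalar threshold no longer suffices because each class $t$ has its own bounds $M_t$ and $V_t$.
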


A polynomial-time algorithm enumerates the leader's pure strategies, and, for each of them, it computes an NE minimizing the leader's cost in the resulting followers' symmetric singleton CG, which can be done in polynomial time using dynamic programming, as shown by~\citeauthor{ieong2005fast}~[\citeyear{ieong2005fast}].

\section{Complexity Results on $\mathcal{T}$-class  SSCG}\label{sec:singleton_games}

We switch the attention to SSCGs, \emph{i.e.}, games where players' actions are singletons.
In particular, we focus on $\mathcal{T}$-class SSCGs, a generalization of symmetric SSCGs.
%
As shown by~\citeauthor{aij2018leadership}~[\citeyear{aij2018leadership}], finding an OSE in symmetric SSCGs with non-monotonic costs is \textsf{NP}-hard, while the problem becomes easy if: {(a)} we assume that the leader can only play pure strategies~\cite{aij2018leadership}, or {(b)} we force players' costs be monotonic~\cite{Marchesi18:leadership}.

Here, first we show that, under condition {(a)}, computing an OSE is easy also in $\mathcal{T}$-class SSCGs with a fixed number of classes.
Next, we prove that, if condition {(a)} does not hold, then the problem is \textsf{NP}-hard in $\mathcal{T}$-class SSCGs even if we enforce {(b)} and there are only four followers' classes.

Let us start providing a polynomial-time algorithm for computing OSEs in $\mathcal{T}$-class SSCGs with a fixed number of classes and the leader restricted to pure strategies.
%
Our algorithm iterates over leader's pure strategies, \emph{i.e.}, resources $i \in A_\ell$, and, for each of them, it finds an NE minimizing the leader's cost in the followers' game for $\sigma_\ell \in \Delta_\ell : \sigma_\ell(i) = 1$.
Such NE is computed with an extension of the \emph{dynamic programming} algorithm proposed by~\citeauthor{ieong2005fast}~[\citeyear{ieong2005fast}] for finding an optimal NE in symmetric singleton CGs without leadership.
We extend the algorithm to $\mathcal{T}$-class singleton CGs without leadership, which are defined as their Stackelberg counterpart, except for the fact that there is no leader and all players experience the same resource costs $c = \{ c_i \}_{i \in R}$, with $c_{i} : \mathbb{N} \rightarrow \mathbb{Q}$ defined as usual.

We define $\mathcal{O}(h,B_1,\ldots,B_T,M_1,\ldots,M_T,V_1,\ldots,V_T)$ as the cost of an optimal (according to some optimality criterion) NE for a $\mathcal{T}$-class singleton CG restricted to $i$ resources $\{1,2,\ldots,i\} \subseteq R$ and $B_t$ players for each class $t \in \mathcal{T}$, where $M_t$ is the largest cost experienced by a player of class $t$ and $V_t$ is the smallest cost a player of class $t$ would get by switching to another resource.
When the optimality criterion is the minimization of the sum of players' costs, we have:

\begin{restatable}{lemma}{lemmaone}\label{lem:dp}
	$\mathcal{O}(i,B_1,\ldots,B_T,M_1,\ldots,M_T,V_1,\ldots,V_T)$ satisfies the following recursive equation:
	%
	\begin{small}
		\begin{align}
		& \mathcal{O}(i,B_1,\dots,B_T,M_1,\dots,M_T,V_1,\dots,V_T) = \nonumber \\ 
		& \hspace{-2mm} \min_{\substack{ \phantom {A} \\ p_t \in \{0, \ldots, B_t\} \ \forall t \in \mathcal{T} \\ m_t \in \{0, \ldots, M_t\} \ \forall t \in \mathcal{T} \\ v_t \in \{1, \ldots, V_t\} \ \forall t \in \mathcal{T}}} \hspace{-8mm}
		\mathcal{O}(i-1,p_1,\ldots,p_T,m_1,\ldots,m_T,v_1,\ldots,v_T) + b c_i(b) \nonumber 
		\end{align}
		\vspace{-2mm}
		\begin{subequations}
			\begin{align}
			\textnormal{s.t.} \quad& \label{dp:12}   b=\sum_{t \in \mathcal{T}} (B_t -p_t) \\
			\label{dp:15}
			&B_t=p_t & \forall t \in \mathcal{T} : i \notin A_t \\
			\label{dp:14}
			& m_t \leq M_t & \forall t \in \mathcal{T} \\
			\label{dp:13}
			& v_t \geq V_t & \forall t \in \mathcal{T} \\
			\label{dp:2}
			& c_i(b) \leq M_t & \forall t \in \mathcal{T} : B_t-p_t > 0 \\
			\label{dp:4}
			& c_i(b+1) \geq V_t & \forall t \in \mathcal{T} : i \in A_t \\
			\label{dp:3}
			& c_i(b) \leq v_t & \forall t \in \mathcal{T} : B_t-p_t > 0 \\
			\label{dp:5}
			& c_i(b+1) \geq m_t & \forall t \in \mathcal{T} : i \in A_t.
			\end{align}
		\end{subequations}
	\end{small}
\end{restatable}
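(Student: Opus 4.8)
The plan is to read $\mathcal{O}(i,B_1,\ldots,V_T)$ as the minimum, over all NE configurations of the game restricted to the resources $\{1,\ldots,i\}$ that place $B_t$ players of each class $t$, of the total players' cost, subject to the per-class largest experienced cost being at most $M_t$ and the per-class smallest switching cost being at least $V_t$ (these thresholds are exactly what constraints~\eqref{dp:14},~\eqref{dp:13},~\eqref{dp:2},~\eqref{dp:4} encode). I would then prove the stated identity directly, by establishing the two inequalities ``$\ge$'' and ``$\le$'' between $\mathcal{O}(i,\ldots)$ and the right-hand side. The single fact that drives everything is that, since costs depend only on congestions, the sum of players' costs decomposes additively across resources: the contribution of resource $i$ to any configuration is exactly $b\,c_i(b)$ with $b$ its congestion, and the rest comes from $\{1,\ldots,i-1\}$.

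For ``$\ge$'' I would start from an optimal NE $\nu$ for the $i$-resource problem, set $p_t \coloneqq B_t-\nu_{t,i}$ and $b\coloneqq\sum_{t}(B_t-p_t)$, and let $m_t,v_t$ be, respectively, the largest experienced cost and the smallest switching cost of class $t$ over $\{1,\ldots,i-1\}$. I would check that $(p_t,m_t,v_t)_t$ is feasible for the minimization: \eqref{dp:12} is the definition of $b$, \eqref{dp:15} holds because a class cannot use $i$ when $i\notin A_t$, and \eqref{dp:14}--\eqref{dp:13} follow because dropping resource $i$ can only decrease a maximum and increase a minimum; crucially, \eqref{dp:3} and \eqref{dp:5} are precisely the NE conditions of $\nu$ against deviations across resource $i$ (no class-$t$ player on $i$ gains by moving into the subproblem, $c_i(b)\le v_t$, and no subproblem player gains by moving onto $i$, $c_i(b+1)\ge m_t$). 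Since the restriction of $\nu$ to $\{1,\ldots,i-1\}$ keeps the same subproblem congestions, it is itself an NE of the smaller game with parameters $(p_t,m_t,v_t)$, so its cost is at least $\mathcal{O}(i-1,p_1,\ldots,v_T)$; adding $b\,c_i(b)$ yields the bound.

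For ``$\le$'' I would run the construction in reverse: given any feasible tuple $(p_t,m_t,v_t)_t$ and an optimal NE $\nu'$ attaining $\mathcal{O}(i-1,\ldots)$, I extend $\nu'$ by placing $B_t-p_t$ class-$t$ players on resource $i$ (legitimate by \eqref{dp:15}), obtaining total cost $\mathcal{O}(i-1,\ldots)+b\,c_i(b)$. The real work is verifying that this extension is again an NE respecting the thresholds $M_t,V_t$: I would go through the deviation types case by case, using \eqref{dp:3} and \eqref{dp:5} to rule out deviations across the new resource $i$, the NE property of $\nu'$ (inherent in the definition of $\mathcal{O}(i-1,\ldots)$) for deviations internal to $\{1,\ldots,i-1\}$, and \eqref{dp:14},~\eqref{dp:13},~\eqref{dp:2},~\eqref{dp:4} to confirm that the updated per-class maximum stays $\le M_t$ and minimum switching cost stays $\ge V_t$.

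The main obstacle, and the conceptual heart of the lemma, is justifying that tracking only the two aggregates $M_t,V_t$ per class, rather than the whole configuration on $\{1,\ldots,i\}$, suffices to certify the equilibrium globally. The point is that the NE conditions are pairwise between resources, so a player's incentive to move onto a not-yet-processed resource depends on the already-placed profile only through its largest experienced cost and its smallest switching cost; this is exactly what $M_t$ and $V_t$ summarize, and it is what keeps the number of states polynomial. I would make this explicit by showing that the ``max-cost $\le$ min-switching-cost'' characterization of an NE from the preliminaries factorizes over the resource-by-resource processing, so that imposing \eqref{dp:3}/\eqref{dp:5} at every step is equivalent to imposing all no-deviation conditions of the full restricted game.
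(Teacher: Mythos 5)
Your proposal is correct, and it is in fact more complete than the proof the paper gives. The paper's own argument is a one-directional checklist: it asserts that ``all the constraints are necessary'' and, for each of \eqref{dp:15}--\eqref{dp:5}, explains what would go wrong if it were violated (an unavailable action, a broken threshold $M_t$ or $V_t$, or a profitable deviation across resource $i$). That is essentially the feasibility verification embedded in your ``$\ge$'' direction and nothing more: the converse ``$\le$'' direction --- that any feasible tuple $(p_t,m_t,v_t)_{t\in\mathcal{T}}$ combined with an optimal NE of the $(i-1)$-resource subproblem recombines into an NE of the $i$-resource problem that respects $M_t$ and $V_t$ --- is left implicit, and it is exactly the part where one must argue that the constraints are not only necessary but sufficient, i.e., that the pairwise no-deviation conditions factorize through the two per-class aggregates. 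Your recombination step and your closing remark about this factorization supply that missing half. You also quietly repair a looseness in the statement's semantics: the paper defines $M_t$ and $V_t$ as the \emph{exact} largest experienced cost and smallest switching cost, yet constraints \eqref{dp:14}, \eqref{dp:13}, \eqref{dp:2}, \eqref{dp:4} enforce only one-sided inequalities; your reading of the state as bounds (largest experienced cost at most $M_t$, smallest switching cost at least $V_t$) is the interpretation under which the recursion is literally an equality. In short, both arguments rest on the same idea --- the constraints encode precisely the cross-resource equilibrium and threshold conditions --- but yours establishes the claimed equality, while the paper's checklist only establishes that valid decompositions are never excluded from the minimization.
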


\begin{remark}
	Lemma~\ref{lem:dp} holds also for other optimality criteria, e.g., the minimization of the cost of a given resource. 
\end{remark}

Thus, we can conclude the following:

\begin{restatable}{theorem}{theoremfour}\label{thm:dyn}
	In $\mathcal{T}$-class singleton CGs without leadership, an optimal (given an optimality criterion) NE can be found in $O(n^{6T} r^{4T+1})$. 
	In $\mathcal{T}$-class SSCGs, an OSE can be found in $O(n^{6T} r^{4T+2})$ if we restrict the leader to play pure strategies.
\end{restatable}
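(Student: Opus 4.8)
The plan is to turn the recursion of Lemma~\ref{lem:dp} into a dynamic program, then bound the number of table entries and the work spent per entry.

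I would first treat $\mathcal{T}$-class singleton CGs without leadership. The table $\mathcal{O}(i,B_1,\dots,B_T,M_1,\dots,M_T,V_1,\dots,V_T)$ is filled by increasing $i$, each entry being computed from entries with index $i-1$ via Lemma~\ref{lem:dp}; the base case $i=0$ is immediate, since there are no resources on which to place players. The cost of an optimal NE is then recovered as the minimum of $\mathcal{O}(r,n_1,\dots,n_T,M_1,\dots,M_T,V_1,\dots,V_T)$ over the boundary values with $M_t \le V_t$ for all $t \in \mathcal{T}$, which is exactly the NE condition: the largest cost experienced by a class-$t$ player must not exceed the smallest cost it could obtain by deviating.

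The crux of the running-time bound, and the step I expect to be the main obstacle, is counting the admissible values of the cost parameters $M_t$ and $V_t$. These are \emph{a priori} arbitrary rationals, so a naive tabulation would be unbounded; the key observation is that each must coincide with some realized follower cost $c_{i,f}(x)$ with $i \in R$ and $x \in \{1,\dots,n\}$, a set of at most $rn$ distinct values. Hence each of the $2T$ cost coordinates ranges over $O(rn)$ values, while each of the $T$ counters $B_t$ ranges over $O(n)$ values and $i$ over $O(r)$ values, giving $O\!\left(r \cdot n^{T} \cdot (rn)^{2T}\right) = O(n^{3T} r^{2T+1})$ entries. For each entry, the minimization in Lemma~\ref{lem:dp} is over the tuples $(p_t)_{t \in \mathcal{T}}$, $(m_t)_{t \in \mathcal{T}}$, and $(v_t)_{t \in \mathcal{T}}$: the $p_t$ each take $O(n)$ integer values, while the $m_t$ and $v_t$, being bounded by $M_t$ and $V_t$, each range over the same $O(rn)$ cost values, for $O(n^{T} (rn)^{2T}) = O(n^{3T} r^{2T})$ tuples, with every constraint checkable in time polynomial in the parameters. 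Multiplying the two counts yields the claimed $O(n^{6T} r^{4T+1})$.

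For $\mathcal{T}$-class SSCGs with a pure-strategy leader, I would enumerate the leader's pure strategies, namely the resources $i \in A_\ell$, of which there are at most $r$. Committing to $i$ deterministically raises the congestion of $i$ by one from the followers' viewpoint, so the induced followers' game is again a $\mathcal{T}$-class singleton CG, now with costs $c_{i,f}(\cdot+1)$ on $i$ and $c_{j,f}(\cdot)$ on every $j \ne i$. On this game I run the dynamic program above, but with the optimality criterion set to minimizing the leader's cost $c_{i,\ell}(\nu_i+1)$ on the committed resource; by the Remark following Lemma~\ref{lem:dp}, the same recursion applies. Returning the leader's strategy attaining the smallest such value over all $i \in A_\ell$ yields an OSE and multiplies the running time by a factor $r$, giving $O(n^{6T} r^{4T+2})$.
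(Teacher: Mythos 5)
Your proposal is correct and follows essentially the same route as the paper's proof: bound the table size by $O(n^{3T} r^{2T+1})$ using the observation that $M_t$ and $V_t$ range over the $O(nr)$ realized cost values, bound the per-entry minimization by $O(n^{3T} r^{2T})$, and for the Stackelberg case enumerate the $O(r)$ leader's pure strategies with the optimality criterion set to minimizing the cost of the committed resource. Your additional details---the $i=0$ base case, the terminal condition $M_t \le V_t$ encoding the equilibrium requirement, and the explicit shift $c_{i,f}(\cdot+1)$ induced by the commitment---are all consistent completions of steps the paper leaves implicit.
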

%

\begin{corollary}
	In $\mathcal{T}$-class SSCGs with a fixed number of classes, an OSE can be found in polynomial time if we restrict the leader to play pure strategies.
\end{corollary}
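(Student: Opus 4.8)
The plan is to derive the statement directly from Theorem~\ref{thm:dyn}, which already supplies a dynamic programming algorithm computing an OSE in a $\mathcal{T}$-class SSCG in time $O(n^{6T} r^{4T+2})$, under the restriction that the leader plays pure strategies. All that remains is to observe that this bound collapses to a polynomial in the input size once the number of classes $T$ is treated as a fixed constant rather than part of the input.

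First I would fix $T = k$ for some constant $k$ independent of the instance. Then the exponents $6T = 6k$ and $4T + 2 = 4k + 2$ appearing in the running time of Theorem~\ref{thm:dyn} are themselves fixed natural numbers, so the bound reads $O(n^{6k} r^{4k+2})$, which is a polynomial in $n$ and $r$ of degree $10k + 2$. Second, I would note that both $n$ and $r$ are at most the size $I$ of the input, since any instance must explicitly list its $n$ players (partitioned into the $T$ classes, together with the class action sets $A_t$) and its $r$ resources (together with the leader's and followers' cost functions). Hence $n^{6k} r^{4k+2} \le I^{10k+2}$, and the algorithm of Theorem~\ref{thm:dyn} runs in time polynomial in $I$, which establishes the corollary.

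There is essentially no obstacle to overcome here: the entire computational content lies in Theorem~\ref{thm:dyn}, and this corollary is merely the remark that a running time whose exponents grow linearly in $T$ is polynomial precisely when $T$ is held fixed. The only point deserving a moment's care is the distinction between $T$ being a constant, as assumed in the statement, and $T$ being part of the input; in the latter regime the factor $n^{6T}$ is exponential in the description length of the class count, so the ``fixed number of classes'' hypothesis is genuinely used. I would also emphasize that the restriction to pure leader strategies is likewise essential, since allowing mixed commitments renders the problem \textsf{NP}-hard already with four classes and monotonic costs, so the two hypotheses of the corollary cannot be dropped.
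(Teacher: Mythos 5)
Your proof is correct and follows exactly the route the paper intends: the corollary is stated immediately after Theorem~\ref{thm:dyn} with no separate proof, precisely because it is the observation that the $O(n^{6T} r^{4T+2})$ bound becomes polynomial in the input size once $T$ is a fixed constant. Your additional remarks on why both hypotheses (fixed $T$, pure leader strategies) are needed are accurate and consistent with the paper's hardness result in Theorem~\ref{thm:np_hard_types}.
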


Now, we prove the hardness result, using a reduction from $K$-\textsc{PARTITION}, an \textsf{NP}-compete variant of \textsc{PARTITION} with an additional size constraint~\cite{aij2018leadership}.
	%


\begin{restatable}{theorem}{thmpart}\label{thm:np_hard_types}
	Computing an OSE in $\mathcal{T}$-class SSCGs is \textsf{NP}-hard, even when cost functions are monotonic and $|\mathcal{T}|=4$.
\end{restatable}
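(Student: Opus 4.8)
The plan is to reduce from $K$-\textsc{PARTITION}: given positive integers $w_1,\dots,w_{2K}$ with $\sum_j w_j = 2W$, decide whether there is an index set $J$ with $|J|=K$ and $\sum_{j\in J} w_j = W$. I would fix a small target cost $\epsilon>0$ and construct a $4$-class SSCG with monotonic costs whose OSE has leader cost $\epsilon$ exactly when the instance is a yes-instance, and cost bounded away from $\epsilon$ (say $\ge 1$) otherwise; \textsf{NP}-hardness then follows immediately. Since pure-strategy commitments are handled in polynomial time by the preceding corollary, the reduction must crucially exploit the leader's ability to play mixed strategies.

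For the gadget, I would introduce, for each item $j$, a pair of \emph{item resources} and use the leader's mixed strategy to encode a tentative assignment of item $j$ to one of the two sides of the partition: the probability $\sigma_\ell$ places on an item resource $i$ reshapes the followers' expected cost $c_{i,f}^{\sigma_\ell}$ through the identity $c_{i,f}^{\sigma_\ell}(x)=\sigma_\ell(i)\,c_{i,f}(x+1)+(1-\sigma_\ell(i))\,c_{i,f}(x)$. The four follower classes would play distinct roles: two classes of \emph{weight followers}, whose equilibrium occupation of the item resources reads off the chosen subset and whose NE indifference conditions $c_{i,f}^{\sigma_\ell}(\nu_i)\le c_{j,f}^{\sigma_\ell}(\nu_j+1)$ encode the sum constraint $\sum_{j\in J}w_j=W$; and two auxiliary \emph{consistency classes} that enforce the cardinality constraint $|J|=K$ and tie the leader's objective to whether the sum and size constraints hold simultaneously. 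Because the leader's distribution must sum to one, its probability mass behaves as a shared budget, which I would exploit to couple the per-item choices and prevent the leader from satisfying the constraints only fractionally.

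The soundness argument splits in the usual two directions. For completeness (yes-instance $\Rightarrow$ cheap OSE), given a valid balanced subset $J$ of size $K$ I would exhibit an explicit leader mixed strategy together with a followers' action profile $\{\nu_t\}_{t\in\mathcal{T}}$, verify it is an NE via the class-wise indifference condition above, and compute the leader cost to be exactly $\epsilon$. For soundness (cheap OSE $\Rightarrow$ yes-instance) I would argue the contrapositive: monotonicity forces the followers' NE to split essentially integrally across each item pair, so any commitment achieving leader cost below $1$ induces a subset $J$ whose weight sum equals $W$ (from the weight-class indifference constraints) and whose cardinality equals $K$ (from the consistency classes); hence $K$-\textsc{PARTITION} is satisfiable.

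The main obstacle I anticipate is designing monotonic cost functions that simultaneously (a) make the followers' NE \emph{round} the leader's continuous probabilities into a genuine $0/1$ assignment of items to sides, (b) encode both the sum and the exact-cardinality constraints of $K$-\textsc{PARTITION}, and (c) do so with only four classes. Enforcing the cardinality constraint with monotonic costs is the delicate point, since monotonicity rules out the non-monotonic ``valley'' tricks available in the symmetric reduction of \citeauthor{aij2018leadership}~[\citeyear{aij2018leadership}]; I expect to spend most of the effort on the consistency classes and on the soundness direction, namely proving that no fractional or mismatched commitment can sneak the leader's cost below the threshold.
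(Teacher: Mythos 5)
Your choice of source problem ($K$-\textsc{PARTITION}) and the broad mechanism --- using the leader's mixed-strategy probabilities on item resources to encode weights, and follower NE conditions to ``read off'' a subset --- are exactly right and match the paper's strategy in spirit. But what you have written is a plan, not a proof, and the parts you defer (``I expect to spend most of the effort on\ldots'') are precisely where the content of the theorem lives. Concretely, the paper's reduction works through a two-sided squeeze that your sketch never produces: (i) a \emph{sandwich} argument pins the leader's probability on each selected item resource to exactly $s_i/(2XK)$ --- if it were smaller, followers of one class (parked on an expensive dummy resource $r_z$) would deviate onto the item resource; if it were larger, the single follower of another class sitting on that resource would flee to a safe resource $r_w$; (ii) two single-follower auxiliary classes force the leader to put probability at least $1-\tfrac{1}{2K}$ on a dummy resource $r_y$, so the total probability budget left for item resources is at most $\tfrac{1}{2K}$, which translates into $\sum_{S'} s_i \le X$; and (iii) the leader's own cost function is engineered so that the threshold $c_\ell^\sigma \le 2X - \tfrac{X}{K}$ forces $\sum_{S'} s_i \ge X$. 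Cardinality is not handled by a dedicated ``consistency class'' at all: it falls out for free because the class of followers occupying item resources has exactly $K$ members, each of whom must sit alone on a distinct resource. Your proposal names none of these mechanisms, and in particular gives no way for the \emph{sum} constraint to be encoded --- indifference conditions alone cannot do it; in the paper it emerges only from combining the probability-budget bound with the leader's cost threshold.

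A second, more structural problem: you aim for a gap reduction (leader cost $\epsilon$ on yes-instances versus $\ge 1$ on no-instances), mirroring the \textsc{3SAT} construction used for the non-singleton case. That target is almost certainly too ambitious here, and is not what the paper achieves: its reduction separates yes from no instances only at the threshold $2X - \tfrac{X}{K}$, with an inverse-polynomial (not constant) gap, which is exactly why Theorem~\ref{thm:np_hard_types} claims \textsf{NP}-hardness but --- unlike Theorems~\ref{thm:hard_general_symm}--\ref{thm:inapx_general_follower_non_sing} --- makes no Poly-\textsf{APX} inapproximability claim. Since the leader's optimal cost in this construction is a linear function $2X - \tfrac{1}{K}\sum_{S'} s_i$ of the selected weights, near-optimal but infeasible subsets give costs only slightly above the threshold; manufacturing a constant gap with monotonic costs and four classes would require a genuinely different (and currently unknown) idea. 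You should either redesign your target around a fine threshold of this kind, or be prepared to prove something strictly stronger than the theorem asks.
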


\section{Mathematical Programming Formulations}\label{sec:formulations}

In this section, we provide two MILP formulations for finding OSEs in SCGs.
The first one is specifically tailored to $\mathcal{T}$-class SSCGs, while the second one works for general SCGs.
%


Let us start with $\mathcal{T}$-class SSCGs.
%
 %
 For ease of presentation, let $V(i) \coloneqq \{ 1, \ldots, v_i^{\max} \}$ be the set of possible congestion levels induced by the followers on resource $i \in R$, with $v_i^{\max} \coloneqq \sum_{t \in \mathcal{T}: i \in A_t} n_t$.
 Moreover, let $V(t) \coloneqq \{ 1, \ldots, n_t \}$ for $t \in \mathcal{T}$.
 %
 %
 For every class $t \in \mathcal{T}$, resource $i \in A_t$, and value $v \in V(t)$, let us introduce the binary variable $q_{t i v}$, which is equal to 1 if and only if $v$ followers of class $t$ select resource $i$. 
 Furthermore, for every $i \in R$ and $v \in V(i)$, let the binary variable $y_{i v}$ be equal to 1 if and only if $v$ followers select resource $i$. 
 These variables represent followers' configurations.
 Specifically, for $t \in \mathcal{T}$, $\nu_t \in \mathbb{N}^r$ is such that $\nu_{t,i} = \sum_{v \in V(t)} v \, q_{t i v}$ for all $i \in R$, while $\nu \in \mathbb{N}^r$ is such that $\nu_{i} = \sum_{v \in V(i)} v \, y_{i v}$ for all $i \in R$.
 %
 %
 For every $i \in R$, let $\alpha_i \in [0,1]$ be equal to $\sigma_\ell(i)$,
 and, for $v \in V(i)$, let the auxiliary variable $z_{i v}$ be equal to the bilinear term $y_{i v} \alpha_i$.
 Finally, let $M > \max\{c_{i,f}(v) \mid i \in R, v \in \{ 1, \ldots, v_i^{\max} +2 \} \}$.
 
 The complete MILP formulation reads as follows:

 \begin{scriptsize}
 \begin{subequations}\label{eq:milp_dif_types}
 	\begin{align}
 	\min & \quad \sum_{i \in R} \ \sum_{v \in V(i) } c_{i, \ell}(v+1) \,  z_{i v} \label{eq:milp_dif_types_of} \\
 	\textnormal{s.t.} & \sum_{v \in V(t) } q_{t i v} \leq 1 \hspace{3.1cm} \forall t \in \mathcal{T}, \forall i \in A_t \label{eq:milp_dif_types_cons1} \\
 	& \sum_{v \in V(i) } y_{i v} \leq 1 \hspace{4.25cm}  \forall i \in R  \label{eq:milp_dif_types_cons10} \\
 	& \sum_{i \in A_t} \ \sum_{v \in V(t)} v \, q_{t i v}  = n_t \hspace{3.1cm}   \forall t \in \mathcal{T}  \label{eq:milp_dif_types_cons2}\\
 	& \sum_{t \in \mathcal{T} : i \in A_t} \sum_{v \in V(t)} v \, q_{t i v} = \sum_{v \in V(i)}v \, y_{i v} \hspace{1.5cm}  \forall i \in R \label{eq:milp_dif_types_cons9} \\
 	& \sum_{v \in V(j)} \left(  y_{j v} c_{j,f}(v+1) + z_{j v} \Big( c_{j,f}(v+2) - c_{j,f}(v+1)  \Big)  \right) \geq \nonumber \hspace{-2cm}\\
 	& \sum_{v \in V(i)} \left(  y_{i v} c_{i,f}(v) + z_{i v} \Big( c_{i,f}(v+1) - c_{i,f}(v)  \Big)  \right) + \nonumber \\
 	& \hspace{1cm}- M \Big(1- \sum_{v \in V(t)} q_{t i v} \Big) \hspace{0.7cm} \forall t \in \mathcal{T}, \forall i \neq j \in A_t \label{eq:milp_dif_types_cons3} \\
 	& z_{i v} \leq \alpha_i \hspace{3.7cm}  \forall i \in R, \forall v \in V(i) \label{eq:milp_dif_types_cons4} \\
 	& z_{i v} \leq y_{i v} \hspace{3.60cm} \forall i \in R, \forall v \in V(i) \label{eq:milp_same_actions_cons5} \\
 	& z_{i v} \geq \alpha_i + y_{i v} -1 \hspace{2.47cm} \forall i \in R, \forall v \in V(i) \label{eq:milp_same_actions_cons6} \\
 	& z_{i v} \geq 0 \hspace{3.85cm} \forall i \in R, \forall v \in V(i) \label{eq:milp_same_actions_cons8}\\
 	& \sum_{i \in R} \alpha_i = 1 \label{eq:milp_same_actions_cons7}\\
 	& \alpha_i \geq 0 \hspace{5.2cm}  \forall i \in R \\
 	& \alpha_i = 0 \hspace{5.1cm}  \forall i \notin A_\ell \\
 	& q_{t i v} \in \{ 0,1 \} \hspace{2.4cm}  \forall t \in \mathcal{T}, i \in A_t,  v \in V(t) \\
 	& y_{i v} \in \{ 0,1 \} \hspace{3.2cm}  \forall i \in R, \forall v \in V(i).
 	\end{align}
 \end{subequations}
 \end{scriptsize}

 %
 Function~\eqref{eq:milp_dif_types_of} is the leader's expected cost.
 Constraints~\eqref{eq:milp_dif_types_cons1} ensure that at most one variable $q_{t i v}$ be equal to $1$ for each class $t \in \mathcal{T}$ and resource $i \in A_t$, and, thus, the number of followers of class $t$ on each resource is uniquely determined (note that $\sum_{v \in V(t)} q_{t i v} = 0$ if no follower of class $t$ selects resource $i$).
  Constraints~\eqref{eq:milp_dif_types_cons10} ensure that at most one variable $y_{i v}$ be equal to $1$ for each resource $i \in R$, which guarantees that the congestion level of each resource is uniquely determined.
 Constraints~\eqref{eq:milp_dif_types_cons2} guarantee that the followers' configuration be well-defined, \emph{i.e.}, for all $t \in \mathcal{T}$, exactly $n_t$ followers of class $t$ are present.
 Constrains~\eqref{eq:milp_dif_types_cons9} ensure that the congestion level on resource $i \in R$ be equal to the sum of the congestion levels induced by all classes. 
 Constraints~\eqref{eq:milp_dif_types_cons3} force the followers' configurations defined by the $q_{t i v}$ variables be an NE for the leader's strategy identified by the $\alpha_i$ variables.
 This follows from the fact that, being $z_{iv} = y_{i v} \alpha_i $ and $z_{j v} =  y_{j v} \alpha_j$, the right-hand side is the cost incurred by the followers of class $t \in \mathcal{T}$ selecting resource $i \in A_t$, while the left-hand side is the cost they would incur after deviating to $j \neq i \in A_t$.
 %
 %
 Note that, for each $t \in \mathcal{T}$, the constrains are active only if there is at least one follower of class $t$ selecting $i$. 
 %
 %
 Finally, Constraints~\eqref{eq:milp_dif_types_cons4}--\eqref{eq:milp_same_actions_cons8} are McCormick envelope constraints~\cite{mccormick1976computability} which guarantee $z_{i v} = y_{i v} \alpha_i$ when $y_{iv} \in \{0,1\}$.
 


Next, we extend Formulation~\eqref{eq:milp_dif_types} to deal with general SCGs.
Letting $ v_i^{\max} \coloneqq | \{ p \in F \mid \exists a_p \in A_p : i \in a_p \}  |$ be the maximum number of followers who can select resource $i \in R$, we define $V(i) \coloneqq \{1, \ldots, v_i^{\max} \}$.
For every follower $p \in F$ and action $a_p \in A_p$, we introduce the binary variable $x_{p \, a_p}$, which is equal to 1 if and only $p$ plays $a_p$.
Moreover, for every $a_\ell \in A_\ell$, let $\alpha_{a_\ell} \in [0,1]$ be equal to $\sigma_\ell(a_\ell)$.
All the variables already defined in~Formulation~\eqref{eq:milp_dif_types} are used here with the same meaning.
Finally, let $M > \sum_{i \in R} \max\{c_{i,f}(v) \mid i \in R, v \in \{ 1, \ldots, v_i^{\max} +2 \} \}$.

The complete MILP formulation reads as follows:

\begin{scriptsize}
\begin{subequations}\label{eq:milp_general}
	\begin{align}
		\min & \quad \sum_{i \in R} \ \sum_{v \in V(i) } c_{i, \ell}(v+1) \,  z_{i v} \label{eq:milp_non_singleton_of} \\
   \textnormal{s.t.} & \sum_{a_p \in A_p} x_{p \, a_p} = 1 \hspace{4cm} \forall p \in F \label{eq:milp_non_singleton_cons0} \\
                & \sum_{v \in V(i) } y_{iv} \leq 1 \hspace{4.25cm} \forall i \in R \label{eq:milp_non_singleton_cons1} \\
		&  \sum_{v \in V(i)} v \, y_{i v}  = \sum_{p \in F} \ \sum_{a_p \in A_p : i \in a_p} x_{p \, a_p} \hspace{1.5cm} \forall i \in R  \label{eq:milp_non_singleton_cons2}\\
		& \hspace{-6mm} \sum_{i \in a_p' \setminus a_p} \sum_{v \in V(i)} \left(  y_{i v} c_{i,f}(v+1) + z_{i v} \Big( c_{i,f}(v+2) - c_{i,f}(v+1)  \Big)  \right) \geq \nonumber \hspace{-5cm}\\
   & \hspace{-6mm} \sum_{i \in a_p \setminus a_p'} \sum_{v \in V(i)} \left(  y_{i v} c_{i,f}(v) + z_{i v} \Big( c_{i,f}(v+1) - c_{i,f}(v)  \Big)  \right) + \nonumber \\
   & \hspace{1.3cm}- M \Big(1- x_{p \, a_p} \Big) \hspace{0.85cm} \forall p \in F, a_p \neq a_p' \in A_p \label{eq:milp_non_singleton_cons3} \\
		& z_{i v} \leq \sum_{a_\ell \in A_\ell : i \in a_\ell} \alpha_{a_\ell} \hspace{2.15cm} \forall i \in R, \forall v \in V(i) \label{eq:milp_different_actions_cons4} \\
		& z_{i v} \leq y_{i v} \hspace{3.6cm} \forall i \in R, \forall v \in V(i) \label{eq:milp_different_actions_cons5} \\
		& z_{i v} \geq \sum_{a_\ell \in A_\ell : i \in a_\ell}\alpha_{a_\ell} + y_{i v} -1 \hspace{0.95cm} \forall i \in R, \forall v \in V(i) \label{eq:milp_different_actions_cons6} \\
		& z_{i v} \geq 0 \hspace{3.85cm} \forall i \in R, \forall v \in V(i) \label{eq:milp_different_actions_cons8}\\
		& \sum_{a_\ell \in A_\ell} \alpha_{a_\ell} = 1 \label{eq:milp_different_actions_cons7}\\
		& \alpha_{a_\ell} \geq 0 \hspace{4.8cm} \forall {a_\ell} \in A_\ell \label{eq:milp_non_singleton_cons77}\\
        & x_{p \, a_p} \in \{ 0,1 \} \hspace{3.1cm} \forall p \in F, \forall a_p \in A_p \\
		& y_{i v} \in \{ 0,1 \} \hspace{3.2cm} \forall i \in R, \forall v \in V(i).
	\end{align}
\end{subequations}
\end{scriptsize}

%
Function~\eqref{eq:milp_non_singleton_of}, Constraints~\eqref{eq:milp_non_singleton_cons1}, and Constraints~\eqref{eq:milp_non_singleton_cons3}--\eqref{eq:milp_non_singleton_cons77} have the same meaning as their counterparts in Formulation~\eqref{eq:milp_dif_types}.
Note that, in this case, $z_{i v} = y_{i v} \,  \sum_{a_\ell \in A_\ell : i \in a_\ell}\alpha_{a_\ell} $, where the summation represents the probability $\sigma_\ell(i)$ and $\sigma_\ell$ is identified by the $\alpha_{a_\ell}$ variables.
Constraints~\eqref{eq:milp_non_singleton_cons0} ensure that each follower selects exactly one action.
Constraints~\eqref{eq:milp_non_singleton_cons2} guarantee that the followers' configuration represented by the variables $y_{i v}$ be well-defined.
%
Note that, 
Constraints~\eqref{eq:milp_non_singleton_cons3} are enforced for each follower $p \in F$ here, and they are active only for the action $a_p \in A_p$ that she plays.
Constraints~\eqref{eq:milp_non_singleton_cons3} do not account for the costs of resources $i \in a_p \cup a_p'$, since they do not change when deviating from $a_p$ to $a_p'$.

 \begin{figure*}[!htp]
	\begin{minipage}{4.3cm}
		\includegraphics[scale=.38]{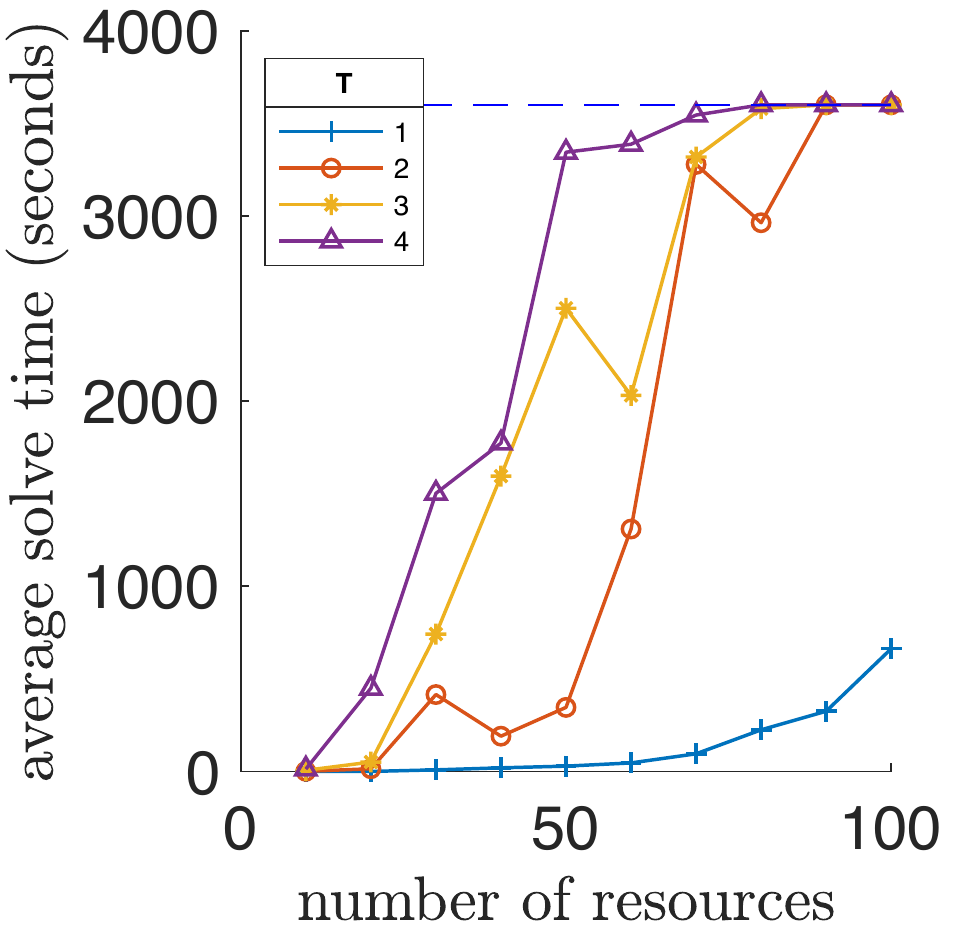}
		\subcaption{$\mathcal{T}$-class SSCGs $(n_t = 0.5\,r)$}
	\end{minipage}
	\begin{minipage}{4.3cm}
		\includegraphics[scale=.38]{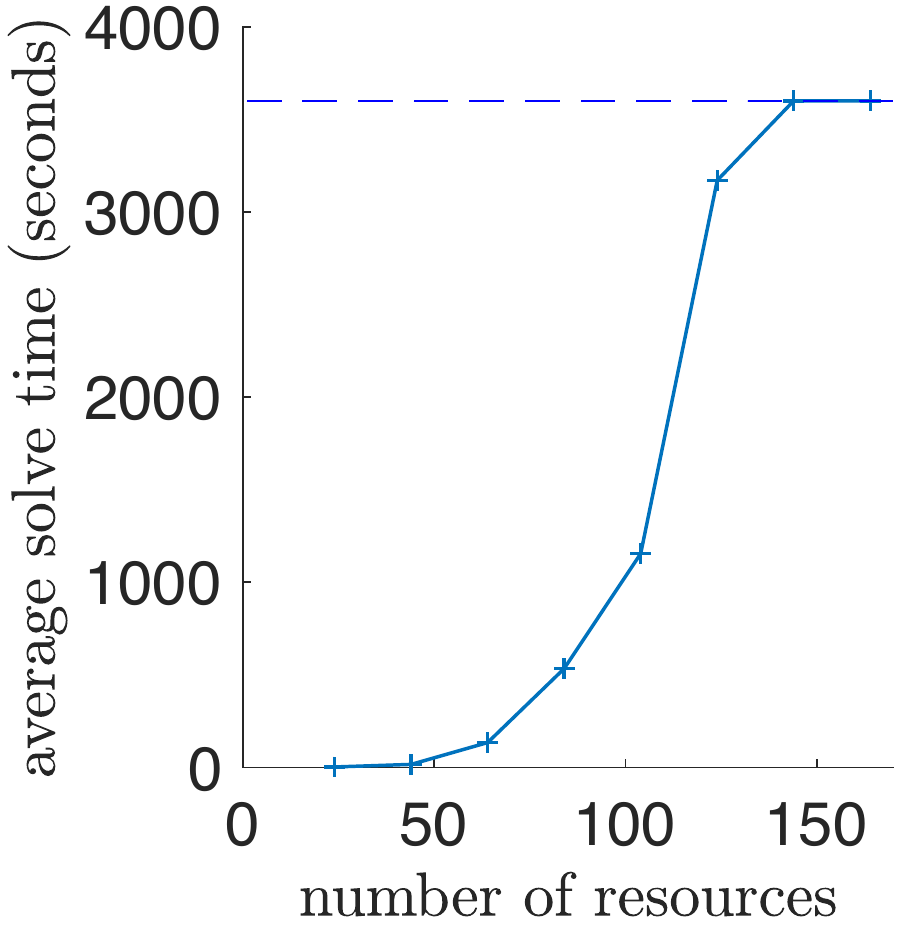}
		\subcaption{Worst-case $\mathcal{T}$-class SSCGs}
	\end{minipage}
	\begin{minipage}{4.3cm}
		\includegraphics[scale=.38]{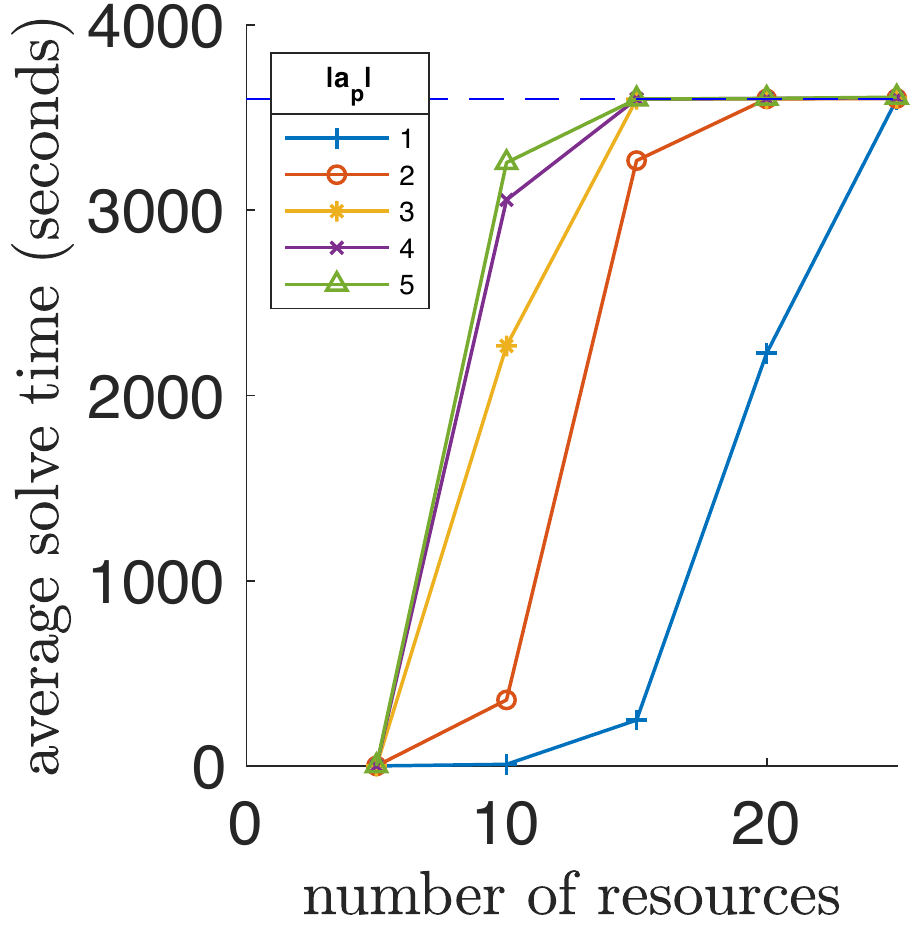}
		\subcaption{SCGs $(n = 2\,r)$}
	\end{minipage}
	\begin{minipage}{4.3cm}
		\includegraphics[scale=.38]{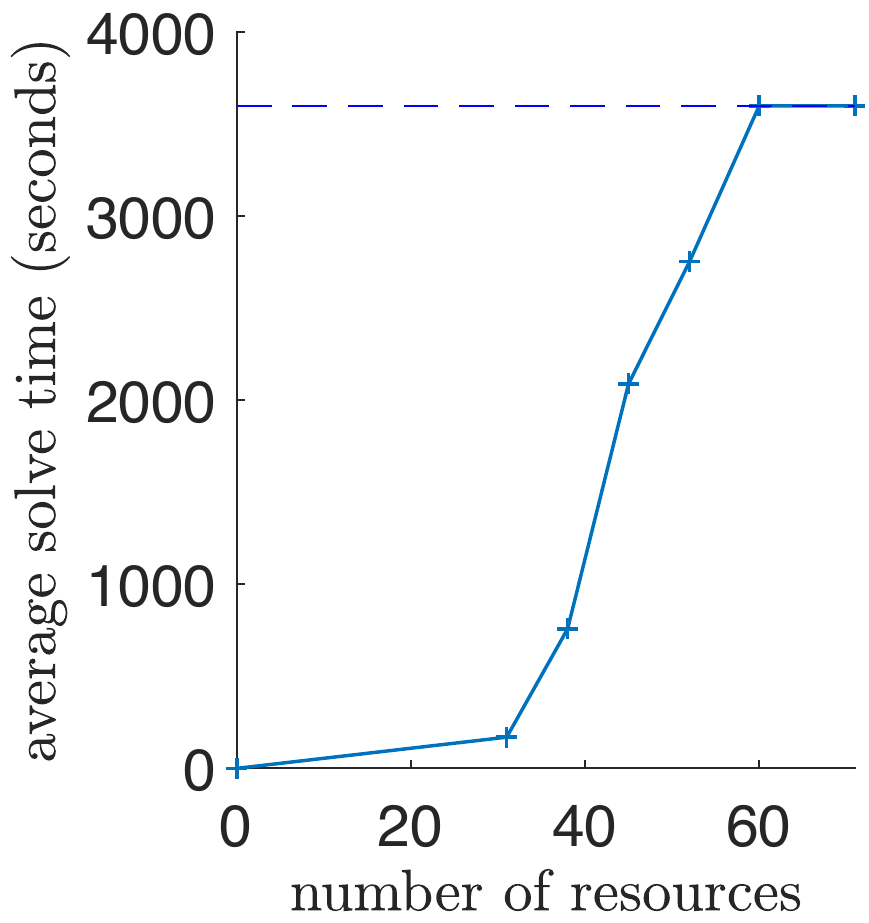}
		\subcaption{Worst-case SCGs}
	\end{minipage}	\caption{Computing times of Formulations~\eqref{eq:milp_dif_types}~and~\eqref{eq:milp_general} on randomly generated game instances and worst-case instances.}
	\label{fig:exp}
\end{figure*}

\section{Experimental Results}\label{sec:experiments}

%
%
There is no standard testbed for our games and there is no evidence that some structured games are more representative than others.
Thus, we test the MILP formulations proposed in Section~\ref{sec:formulations} on randomly generated games, which represent instances of average-case complexity, and games based on the reductions provided in the proofs of Theorems~\ref{thm:hard_general_symm}~and~\ref{thm:np_hard_types}, which, instead, represent worst-case complexity instances.

All the experiments are run on a UNIX machine with a total of 32 cores working at 2.3 GHz, equipped with 128 GB of RAM.
Each instance is solved on a single core within a time limit of 3600 seconds.
We use GUROBI 7.0 (with Python interface) as MILP solver.

\subsubsection{Random Game Instances}
%

%
For $\mathcal{T}$-class SSCGs, we generate random game instances with $r \in \{10,20,30,40,50,60,70,80,90,100\}$ resources and $T \in \{1,2,3,4\}$ classes, with $n_t \in \{0.2\,r,0.5\,r,r\}$ followers per class $t \in \mathcal{T}$ and $|A_t|=0.5\,r$ actions per class $t \in \mathcal{T}$.
Cost functions are randomly generated by sampling uniformly from $\{1,\ldots,n r T\}$. 
For general SCGs, we generate instances with $r \in \{5,10,15,20,25\}$ resources and $n \in \{r,2\,r,3\,r\}$ followers, with $|a_p| \in \{1,2,3,4,5\}$ resources per action $a_p\in A_p$ and $|A_p|=0.5\,r$ actions per player $p \in N$.
Cost functions are randomly generated by sampling uniformly from $\{1,\ldots,nr\}$.
We build a testbed with $20$ game instances per combination of the parameters.

%
Figure~\ref{fig:exp}(a) displays the average computing times for Formulation~\eqref{eq:milp_dif_types} with $0.5\,r$ followers per class. 
The formulation scales quite well in practice. 
Symmetric games ($T=1$) are quickly solved up to $r=100$.
Moreover, we are able to solve within the time limit games with up to four classes, $40$ resources, and $160$ players ($40$ players per class).
Let us notice that the dynamic programming algorithm presented in Theorem~\ref{thm:dyn} can be employed in this setting to find an OSE, if we restrict the leader to play pure strategies. However, preliminary experiments show that its scalability is extremely limited with respect to that of our formulation, as it finds a solution within the time limit only for games with less than 10 resources, while our formulation scales on much bigger games and it also works for mixed-strategy commitments.

Figure~\ref{fig:exp}(c) shows the average computing times for Formulation~\eqref{eq:milp_general} with $2\,r$ followers.
We can conclude that, as expected, game instances with non-singleton actions are much harder to solve than singleton games.
Here, the largest game instances we can solve within the time limit feature actions of cardinality two, $15$ resources, and $30$ players.~\footnote{We report additional experimental results in Appendix~\ref{sec:app_b}.}

\subsubsection{Worst-Case Instances}

%
We test Formulation~\eqref{eq:milp_general} on instances built according to the reduction in Theorem~\ref{thm:hard_general_symm}.
Specifically, we generate these games from random \textsc{3SAT} instances with $|U| \in \{4,5,6,7,8,9\}$ variables and $|C| = k |U|$ clauses, where $k\approx4.26$ is the phase transition parameter determining generally hard-to-solve \textsc{3SAT} instances~\cite{phaseTransition}.
%
%
We test 10 random instances for each number of variables. 
Furthermore, we experiment Formulation~\eqref{eq:milp_dif_types} on instances based on the reduction in Theorem~\ref{thm:np_hard_types}.
We generate these games from random $K$-\textsc{PARTITION} instances with $|S| \in \{20, 40, \ldots,160 \}$ integers, $s_i \in [2,100]$ for all $s_i \in S$, and $K = \frac{|S|}{2}$.
%
We test 10 random instances for each value of $|S|$.

Figures~\ref{fig:exp}(b)~and~\ref{fig:exp}(d) show the computing times for $\mathcal{T}$-class SSCGs and SCGs, respectively.
Surprisingly, the results we obtain are comparable to those for random games, empirically showing that, for the games we study, random instances are not easier to solve than structured ones, as instead it is the case, \emph{e.g.}, in normal-form games~\cite{DBLP:conf/aaai/SandholmGC05}.

\section{Discussion and Future Works}\label{sec:conclusions}

This paper studies the problem of computing Stackelberg equilibria in multi-follower Stackelberg games with a massive number of players, focusing on congestion games.
Our results shed light on the boundary between hard and easy game instances, significantly advancing the state of the art.

Our analysis about games with non-singleton actions shows that, in oder to compute an equilibrium in time polynomial in the number of players, singleton actions are necessary (Theorems~\ref{thm:hard_general_symm}~and~\ref{thm:inapx_general_symm}).
This answers a question left open by~\citeauthor{aij2018leadership}~\shortcite{aij2018leadership}.
Surprisingly, our negative result holds even in games where each action is made of at most two resources.
Future works could address non-singleton games with players' actions enjoying specific structures, as it is the case, \emph{e.g.}, in games played on graphs~\cite{werneck2000finding}.

Our findings about congestion games with multiple classes of players substantially improve known results on symmetric (\emph{i.e.}, single-class) games.
Our dynamic programming algorithm can compute optimal leader's pure-strategy commitments in time polynomial in the number of players, even when costs are arbitrary functions and the players are split into a (fixed) number of different classes (Theorem~\ref{thm:dyn}).

Future extensions of this work may investigate how adopting a different solution concept for the followers' game, such as, \emph{e.g.}, the \emph{correlated equilibrium}~\cite{aumann1974subjectivity}, affects the computation of equilibria.
%
This has been recently studied for polynomial-type games~\cite{ijcai19multi}.

\section*{Acknowledgments}

This work has been partially supported by the Italian MIUR PRIN 2017 Project ALGADIMAR ``Algorithms, Games, and Digital Market''.

\clearpage

%
%
%
%

\bibliographystyle{named}
\bibliography{refs}

\vspace*{\fill}

\appendix

\section{Omitted Proofs}\label{sec:app_a}

The proof of Theorem~\ref{thm:hard_general_symm} is based on reduction from \textsc{3SAT}, which reads as follows.

\begin{definition}[\textsc{3SAT}]\label{def:3sat}
	Given a finite set $C$ of 3-literal clauses defined over a finite set $U$ of variables, is there a truth assignment to the variables which satisfies all clauses? 
	\footnote{ We refer to a \textsc{3SAT} instance as $(C,U)$.
		Moreover, $l \in \phi$ denotes a literal (\emph{i.e.}, a variable or its negation) appearing in $\phi \in C$, while $u(l)$ is the variable corresponding to that literal.
		%
	}
\end{definition}

\theoremone*
\begin{proof}
	We provide a reduction from \textsc{3SAT} showing that a polynomial-time algorithm for finding an OSE in SCGs would allow us to solve any \textsc{3SAT} instance in polynomial time.
	Given a \textsc{3SAT} instance $(C,U)$ and a number $0 < \epsilon < 1$, we build an SCG $\Gamma_\epsilon(C,U)$ admitting an OSE $\sigma$ with $c_\ell^\sigma = \epsilon$ if and only if $(C,U)$ is satisfiable.

	\emph{Mapping.}
	$\Gamma_\epsilon(C,U)$ is defined as follows:
	%
	\begin{itemize}
		\item $N=F\cup \{\ell\}$ with $F = \{ p_u \mid u \in U \} \cup \{ p_\phi \mid \phi \in C \}$;
		\item $R = \{ r_w \} \cup \{ r_{u}, r_{\bar u} , r_{u,t} \mid u \in U \} \cup \{ r_\phi \mid \phi \in C \}$;
		\item $A_p \hspace{-0.5mm}=\hspace{-0.5mm}  \{ a_u \hspace{-0.5mm}= \hspace{-0.5mm} \{r_{u}, r_{u,t}\} ,a_{\bar u} \hspace{-0.5mm}=\hspace{-0.5mm} \{r_{\bar u}, r_{u,t}\} \mid u \in U \}  \cup  \{ a_w \hspace{-0.5mm}=\hspace{-0.5mm} \{r_w\} \} \cup \{ a_{\phi, l} =  \{r_\phi, r_l\} \mid \phi \in C, l \in \phi\} $ for all $p \in N$.
	\end{itemize}
	%
	Cost functions are specified in the following table, and, additionally, $c_{r_{\bar u},f} = c_{r_{u},f}$ and $c_{i,\ell} = c_{i,f}$ for all $i \in R$.
	%
	\begin{center}
		{\renewcommand{\arraystretch}{1.1}\begin{tabular}{c|cccc}
								\hline
								$x$ & $c_{r_\phi,f}$ & $c_{r_{u},f}$  & $c_{r_{u,t},f}$ &   $c_{r_{w},f}$ \\
								\cline{1-5}
								$1$ & $1$ & $0$ & $3$  & $\epsilon$ \\
								$ [ 2,|C|+|U|+1 ] $  & $5$ & $2$ & $5$ &  $4$ \\
							\end{tabular}}
	\end{center}
	%
		%
		%
		Clearly, given $(C,U)$, $\Gamma_\epsilon(C,U)$ can be constructed in polynomial time, since $n = |C| + |U| + 1$, $r = |C| + 3 |U| + 1$, and $|A_p| = 3 |C| + 2 |U| +1$ for all $p \in N$.
		Let us remark that $\Gamma_\epsilon(C,U)$ is symmetric, cost functions are monotonic, and each action has cardinality at most two.
		Moreover, the leader's cost is $\epsilon$ if and only if she is the only player who plays the singleton action $a_w$, otherwise her cost is at least $1$, since other actions contain two resources.
		
		\emph{If.} 
		Suppose that $(C,U)$ is satisfiable, and let $\tau : U \rightarrow \{\mathsf T,\mathsf F\}$ be a truth assignment satisfying all clauses in $C$. 
		Let $\sigma_\ell \in \Delta_\ell : \sigma_\ell(a_w)=1$.
		Using $\tau$, we can build $\sigma = (\sigma_\ell, a)$, with $a \in E^{\sigma_\ell}$, such that $c_\ell^\sigma = \epsilon$.
		Since $\epsilon$ is the minimum leader's cost and the followers behave optimistically, $\sigma$ is an OSE.
		In particular, for every $\phi \in C$, there must be a follower $p \in F$ such that $a_p = a_{\phi,l}$, where $l \in \phi$ evaluates to true under $\tau$. 
		%
		%
		Clearly, 
		one such literal $l \in \phi$ always exists.
		When there are many, 
		take one minimizing $\nu_{r_l}^a$.
		Moreover, for every $u \in U$, if $\tau(u)=\mathsf T$, respectively $\tau(u)=\mathsf F$, there must be a follower $p \in F$ such that $a_p =a_{\bar{u}}$, respectively $a_p =a_{{u}}$.
		%
		%
		%
		%
		Thus, $\nu_{r_\phi}^a = 1$ for all $\phi \in C$, and, similarly, $\nu_{r_{u,t}}^a = 1$ for all $u \in U$.
		Additionally, $\nu_{r_w}^a = 0$ as there are $|C|+|U|$ followers.
		Next, we show that $a \in E^{\sigma_\ell}$.
		First, followers $p \in F$ with $a_p = a_{\phi,l}$ experience a cost $c_p^\sigma = c_{r_\phi, f}(\nu_{r_\phi}^a) + c_{r_l, f}(\nu_{r_l}^a) \leq 3$, since $\nu_{r_\phi}^a = 1$.
		Thus, they do not have any incentive to deviate.
		%
		If they switch to $a_{\phi', l'}$ (with $\phi' \neq \phi$), then they would pay at least $5$ (as $\nu_{r_{\phi'}}^a = 1$).
		Furthermore, they do not deviate to $a_{\phi, l'}$ (with $l' \neq l \in \phi$), as, if $l'$ is false, then they would pay $3$, while, when $l'$ is true, they would incur a cost of $c_{r_\phi, f}(\nu_{r_\phi}^a) + c_{r_{l'}, f}(\nu_{r_{l'}}^a) \geq  c_p^\sigma $ (as $\nu_{r_{l'}}^a \geq \nu_{r_{l}}^a$).
		%
		%
		If, instead, they deviate to $a_u$ or $a_{\bar u}$, then their cost would be at least $5$ (as $\nu_{r_{u,t}}^a = 1$).
		Moreover, they do not switch to $a_w$, since they would pay $4$.
		Followers $p \in F$ with $a_p =a_{{u}}$ has cost $c_p^\sigma = c_{r_{u},f}(\nu_{r_{u}}^a) + c_{r_{u,t},f}(\nu_{r_{u,t}}^a) = 3$ since $\nu_{r_{u}}^a = 1$.
		Thus, they do not deviate, as they would pay at least $4$.
		Similarly, followers $p \in F$ with $a_p =a_{\bar{u}}$ do not deviate.
		As a result, $a$ is an NE and, since $\sigma_\ell(a_w)=1$ and $\nu_{r_w}^a = 0$, it holds $c_\ell^\sigma=\epsilon$.

		\emph{Only if.}
		Suppose there exists an OSE $\sigma = (\sigma_\ell,a)$ such that $c_\ell^\sigma = \epsilon$.
		%
		%
		Thus, $\sigma_\ell(a_w) = 1$ and $\nu_{r_w}^a = 0$.
		%
		For $u \in U$, 
		$\nu_{r_{u,t}}^a \leq 1$, otherwise, if $\nu_{r_{u,t}}^a\geq 2$, some followers would have an incentive to deviate to action $a_w$, paying $4 < 5$.
		%
		%
		Analogously, for $\phi \in C$, 
		$\nu_{r_{\phi}}^a \leq 1$.
		%
		%
		Since there are $|C|+|U|$ followers, $\nu_{r_{u,t}}^a = 1$ for every $u \in U$, and $\nu_{r_{\phi}}^a = 1$ for every $\phi \in C$.
		%
		%
		Thus, for every $u \in U$, there exists $p \in F$ such that either $a_p = a_{ u}$ or $a_p= a_{\bar u}$, and no other follower selects actions $a_{u}$ and $a_{\bar u}$.
		Define a truth assignment $\tau$ such that $\tau(u)=\mathsf T$ if there is $p \in F$ with $a_p = a_{\bar u}$, while $\tau(u)= \mathsf F$ if there is $p \in F$ with $a_p = a_u$.
		Clearly, $\tau$ is well-defined. 
		Moreover, for every $\phi \in C$, there exists a unique follower $p \in F$ and a literal $l \in \phi$ such that $a_p = a_{\phi, l}$, as $\nu_{r_{\phi}}^a = 1$.
		This implies that no follower plays $a_l$, otherwise her cost would be at least $5$, and she would deviate to $a_w$, paying $4$.
		Thus, if $l$ is positive, there is $p \in F$ with $a_p = a_{\bar u}$, while, if it is negative, there is $p \in F$ with $a_p = a_ u$.
		%
		%
		Therefore, $\tau$ satisfies all clauses.
\end{proof}

\theoremtwo*

\begin{proof}
	Given a \textsc{3SAT} instance $(C,U)$, we build an SCG $\Gamma_\epsilon(C,U)$ as in the proof of  Theorem~\ref{thm:hard_general_symm}.
	As previously shown, in an OSE $\sigma$ of $\Gamma_\epsilon(C,U)$, it holds $c_\ell^\sigma = \epsilon$ if and only if $(C,U)$ is satisfiable.
	Next, we prove that, if $(C,U)$ is \emph{not} satisfiable, then any OSE $\sigma$ has $c_\ell^\sigma \geq 1$.
	Suppose, by contradiction, there exists an OSE $\sigma = (\sigma_\ell, a)$ with $c_\ell^\sigma < 1$. 
	This implies that $\sigma_\ell(a_w) > 0$ and $\nu_{r_w}^a = 0$, otherwise $c_\ell^\sigma \geq 1$. 
	Moreover, all the followers must experience a cost at most of $4$, otherwise they would have an incentive to switch to $a_w$.
	Thus, for every $u \in U$, it must be the case that $\nu_{r_{u,t}}^a \leq 1$, otherwise, if $\nu_{r_{u,t}}^a\geq 2$, some followers would have a cost at least $5$.
	Similarly, for every $\phi \in C$, it must be the case that $\nu_{r_{\phi}}^a \leq 1$.
	Following the same reasoning as in the proof of Theorem~\ref{thm:hard_general_symm}, we can build a truth assignment satisfying all clauses, a contradiction.
	%
	%
	%
	Finally, let $\epsilon = \frac{1}{2^{I}}$, where $I$ is the size of $\Gamma_\epsilon(C,U)$.
	%
	Suppose there is a polynomial-time approximation algorithm $\mathcal{A}$ 
	with approximation factor $\text{poly}(I)$, \emph{i.e.}, a polynomial function of $I$.
	If $(C,U)$ is satisfiable, then $\mathcal{A}$ applied to $\Gamma_\epsilon(C,U)$ would return a solution with cost at most $\frac{1}{2^I} \text{poly}(I) < 1$, for $I$ large enough.
	Thus, $\mathcal{A}$ would allow us to solve any \textsc{3SAT} instance in polynomial time, which is a contradiction unless \textsf{P} = \textsf{NP} holds.
\end{proof}

\theoremthree*

\begin{proof}
	The result is readily obtained from the proofs of Theorems~\ref{thm:hard_general_symm}~and~\ref{thm:inapx_general_symm}, by setting $A_\ell = \{a_w\}$ in the reduction.
	%
	%
	%
\end{proof}

\lemmaone*

\begin{proof}
	%
	We show that all the constraints are necessary.
	%
	%
	If Constrains~\eqref{dp:15} were not satisfied, at least one player would play an action not available to her.
	If Constraints~\eqref{dp:14} were not satisfied, there would exist a $t \in \mathcal{T}$ such that $m_t > M_t$, and, thus, there would be at least a resource in $\{1, \dots, i-1\}$ having cost larger than $M_t$ for players of class $t$.
	If Constraints~\eqref{dp:13} were not satisfied, there would exist a $t \in \mathcal{T}$ such that $v_t < V_t$, and, thus, players of class $t$ would incur a cost strictly smaller than $V_t$ when deviating to a resource in $\{1, \dots, i-1\}$.
	%
	%
	If Constraints~\eqref{dp:2} were not satisfied, there would exist a $t \in \mathcal{T} :B_t - p_t > 0$ such that $c_i(b) > M_t$, and, thus, $M_t$ would be smaller than the cost of the most expensive resource used by players of class $t$.
	If Constraints~\eqref{dp:4} were not satisfied, there would exist a $t \in \mathcal{T}: i \in A_t$ such that $c_i(b+1) < V_t$, and, thus, players of class $t$ would incur a cost strictly smaller than $V_t$ upon deviating to another resource.
	If Constraints~\eqref{dp:3} were not satisfied, there would exist a $t \in \mathcal{T} : B_t -p_t>0$ such that $c_i(b) > v_t$ and, thus, at least one player of class $t$ using resource $i$ would have an incentive to deviate to another resource.
	If Constraints~\eqref{dp:5} were not satisfied, there would exist a $t \in \mathcal{T} : i \in A_t$ such that $c_i(b+1) < m_t$ and at least one player of class $t$ experiencing a cost of $m_t$ would prefer to switch to resource $i$.
\end{proof}

\theoremfour*

\begin{proof}
	Since there are at most $n r$ different values of costs $c_i(x)$ ($i \in R$, $x \in \{1,\ldots,n\}$), for each $t \in \mathcal{T}$ there are at most $nr$ values of $M_t$ and $V_t$.
	There are also exactly $r$ values of $i \in R$ and exactly $n_t$ values of $B_t$ for each $t \in \mathcal{T}$.
	%
	Hence, the dynamic programming table of $\mathcal{O}(i,B_1,\dots,B_T,M_1,\dots,M_T,V_1,\dots,V_T)$ contains $O(n^{3T} r^{2T+1})$ entries.
	Computing an entry of the table requires $O(n^{3T} r^{2T})$.
	Overall, an optimal NE is computed in $O(n^{6T} r^{4T+1})$.
	For $\mathcal{T}$-class SSCGs with the leader restricted to play pure strategies, it suffices to run the algorithm for each $i \in A_\ell$, using as optimality criterion the minimization of the cost of $i$.
	Since there are $O(r)$ leader's actions, the overall complexity is $O(n^{6T} r^{4T+2})$.
\end{proof}

The proof of Theorem~\ref{thm:np_hard_types} is based on a reduction from $K$-\textsc{PARTITION}, which reads as follows.

\begin{definition}[$K$-\textsc{PARTITION}]
	Given a finite set $S = \{ s_1, \ldots, s_{|S|} \}$ of positive integers $s_i \in \mathbb{N}^+$ with $\sum_{s_i \in S} s_i = 2X$ and a positive integer $K \leq \frac{|S|}{2}$, is there a subset $S' \subseteq S$ such that $|S'|=K$ and $\sum_{s_i \in S'} s_i = X$?
\end{definition}

\thmpart*
\begin{proof}
	Our reduction from $K$-\textsc{PARTITION} shows that a polynomial-time algorithm for finding an OSE in $\mathcal{T}$-class SSCGs would allow us to solve any $K$-\textsc{PARTITION} instance in polynomial time.
	Given a $K$-\textsc{PARTITION} instance $(S,K)$, we build a game $\Gamma(S,K)$ that admits an OSE $\sigma$ with $c_\ell^\sigma \leq 2X - \frac{X}{K}$ if and only if $(S,K)$ has answer \emph{yes}, \emph{i.e.}, there is $S' \subseteq S$ with $|S'|=K$ and $\sum_{s_i \in S'} s_i = X$.
	Let, w.l.o.g., $s_i \leq X$ for all $s_i \in S$ (if not, $(S,K)$ has answer \emph{no}).
	
	{\em Mapping.}
	$\Gamma(S,K)$ is defined as follows:
	%
	\begin{itemize}
		\item $N = F \cup \{ \ell \}$ and $\mathcal{T} = \{ 1,2,3,4 \}$, where $F = \bigcup_{t \in \mathcal{T}} F_t$ with $|F_1|= K$, $|F_2|=2|S|$, $|F_3|= 1$, and $|F_4|=1$;
		\item $R = R_S \cup \{ r_w, r_x, r_y, r_z \}$ with $R_S= \{ r_i \mid s_i \in S \}$;
		\item $A_1 = R_S \cup \{ r_w \}$, $A_2 = R_S \cup \{ r_z \}$, $A_3 = \{ r_w, r_y \}$, $A_4 = \{ r_x, r_y \}$, and $A_\ell = R_S \cup \{ r_y \}$.
	\end{itemize}
	%
	Costs are specified in the table below, with $C_{r_y, f}=\frac{6K-2}{2K^2 - K}$, $ C_{r_i, f}= \left(1 - \frac{2XK}{s_i} + 2XK \right) \frac{2XK}{s_i}$, and $C_{r_i, \ell}= \frac{2X (2X - s_i)}{s_i}$.
	%
	\begin{center}
		{\renewcommand{\arraystretch}{1.3}
			\setlength{\tabcolsep}{4pt}
			\begin{tabular}{c|ccccccc}
				\hline
				$x$ & $c_{r_i,f}$ & $c_{r_w,f}$ & $c_{r_x,f}$  & $c_{r_y,f}$ & $c_{r_z,f}$ & $c_{r_i,\ell}$ & $c_{r_y,\ell}$ \\
				\cline{1-8}
				$1$ & 0 & $\frac{1}{K}$ & $\frac{3}{K}$ & $\frac{2}{K}$ & $2XK$& $C_{r_i, \ell}$& 0  \\
				$2$ & $\frac{2XK}{s_i}$ & 1 & $\frac{3}{K}$ & $C_{r_y, f}$ & $2XK$& $C_{r_i, \ell}$&$X^4$  \\
				$[3, n]$ & $C_{r_i, f}$ & 1 & $\frac{3}{K}$ & $C_{r_y, f}$& $2XK$ & $X^4$&$X^4$  \\
		\end{tabular}}
	\end{center}
	%
	%
	%
	Clearly, $\Gamma(S,K)$ can be constructed in polynomial time, since $n = K + 2 |S| + 3$, $r = |S| + 4$, $|A_1|=|A_2|=|A_\ell|= |S|+1$, $|A_3|=|A_4|=2$, and each cost can be encoded with a number of bits polynomial in the size of the instance $(S,K)$.
	Notice that resource costs are monotonic in the congestion.
	
	{\em If.} Suppose that $(S,K)$ has answer \emph{yes}, and let $S' \subseteq S$ be such that $|S'|=K$ and $\sum_{s_i \in S'} s_i = X$.
	Using $S'$, we can recover $\sigma = (\sigma_\ell, \{ \nu_t \}_{t \in \mathcal{T}})$ such that the followers' configurations $\{ \nu_t \}_{t \in \mathcal{T}}$ represent an NE for $\sigma_\ell$ and $c_\ell^\sigma = 2X - \frac{X}{K}$.
	Thus, in any OSE the leader's cost must be less than or equal to $2X - \frac{X}{K}$.
	In particular, for every $s_i \in S'$, let $\nu_{1, r_i} = 1$, $\nu_{2, r_i} = 0$, and $\sigma_\ell(r_i) = \frac{s_i}{2XK}$.
	Instead, for $s_i \notin S'$, let $\nu_{1, r_i} = 0$, $\nu_{2, r_i} = 2$, and $\sigma_\ell(r_i) =0$.
	Moreover, we let $\nu_{1, r_w} = 0$, $\nu_{2, r_z} = 2K$, $\nu_{3, r_w} = 1$, $\nu_{3, r_y} = 0$, $\nu_{4, r_x} = 1$, $\nu_{4, r_y} = 0$, and $\sigma_\ell(r_y) =\frac{2K-1}{2K}$.
	It is easy to see that both $\{ \nu_t \}_{t \in \mathcal{T}}$ and $\sigma_\ell$ are well-defined.
	Next, we prove that $\{ \nu_t \}_{t \in \mathcal{T}}$ represent an NE for $\sigma_\ell$.
	First, followers of class 1 experience a cost of $\frac{2XK}{s_i} \frac{s_i}{2XK} = 1$, and, thus, they do not have incentive to deviate to resource $r_w$, as they would still pay $1$.
	Similarly, they do not switch to another resource $r_i \in R_S$, since, if $s_i \in S'$, they would get a cost of $\frac{2XK}{s_i} (1 - \frac{s_i}{2XK}) + C_{r_i,f} \frac{s_i}{2XK} = 2XK> 1$, while, if $s_i \notin S'$, they would pay $C_{r_i,f} > 1$.
	Moreover, followers of class 2 do not deviate, since, if they are selecting a resource $r_i \in R_S$, then their current cost is $\frac{2XK}{s_i}$ and they would pay at least $2XK \geq \frac{2XK}{s_i}$ by deviating, while, if they are using $r_z$, then they experience a cost of $2XK$ and they would pay at least $2XK$ by switching to a resource $r_i \in R_S$.
	Finally, the follower of class 3 pays $\frac{1}{K}$ and she does not deviate to $r_y$, as she would incur a cost at least of $\frac{2}{K}$, and, analogously, the follower of class 4 does not deviate since her cost would be $C_{r_y, f} \frac{2K-1}{2K} +  \frac{2}{K} \frac{1}{2K}= \frac{3}{K}$ and she is paying $\frac{3}{K}$.
	In conclusion, $c_\ell^\sigma = \sum_{s_i \in S'} C_{r_i,\ell} \frac{s_i}{2XK} = \sum_{s_i \in S'} \frac{2X}{K} - \sum_{s_i \in S'} \frac{s_i}{K} = 2X - \frac{X}{K}$, as $|S'|=K$ and $\sum_{s_i \in S'} s_i = X$.
	
	{\em Only if.} Suppose there exists an OSE $\sigma = (\sigma_\ell, \{ \nu_t \}_{t \in \mathcal{T}})$ such that $c_\ell^\sigma \leq 2X - \frac{X}{K}$.
	Using $\sigma$, we build $S' \subseteq S$ such that $|S'|=K$ and $\sum_{s_i \in S'} s_i = X$, showing that $(S,K)$ has answer \emph{yes}.
	First, it must be the case that $\sigma_\ell(r_y) > 0$ and $\nu_{r_y} = 0$, otherwise the leader's cost cannot be smaller than the minimum among costs $C_{r_i,\ell}$, which, since $s_i \leq X$ for all $s_i \in S$, is at least $\frac{2X(2X - X)}{X} = 2X > 2X - \frac{X}{K}$.
	Thus, it must be $\nu_{3, r_y} = \nu_{4, r_y} = 0$ and $\nu_{3, r_w} = \nu_{4, r_x} = 1$.
	As a result, $\sigma_\ell(r_y) \geq 1 - \frac{1}{2K}$, otherwise the follower of class 4 would have an incentive to deviate to resource $r_y$, paying $C_{r_y, f} \sigma_\ell(r_y) +  \frac{2}{K} (1-\sigma_\ell(r_y)) < \frac{3}{K}$.
	This implies that $\sum_{r_i \in R_S} \sigma_\ell(r_i) \leq \frac{1}{2K}$.
	Moreover, $\nu_{1,r_w} = 0$, otherwise, if $\nu_{1,r_w} > 0$, the follower of class 3 would experience a cost of $1$ and she would switch to $r_y$, paying at most $C_{r_y,f} < 1$, assuming, w.l.o.g., $K \geq 4$. 
	Thus, there must be $K$ different resources $r_i \in R_S$ such that $\nu_{1, r_i} = 1$ and $\nu_{2, r_i} = 0$, since, if either $\nu_{1, r_i} > 1$ or $\nu_{2, r_i} > 0$, then $\nu_{r_i} > 1$ and the followers of class 1 selecting $r_i$ would experience a cost greater than or equal to $\frac{2XK}{s_i} > 1$, thus having an incentive to deviate to $r_w$, paying $1$.
	Let $R_S' = \{ r_i \in R_S \mid \nu_{1,r_i}=1 \}$ (notice that $|R_S'|=K$).
	It must be the case that $\nu_{2,r_i} < 3$ for all $r_i \notin R_S'$, otherwise a follower of class 2 would have an incentive to deviate to $r_z$ (as $C_{r_i,f} > 2XK$).
	Thus, since $|F_2|=2|S|$, there are at least $2K$ followers on $r_z$.
	Furthermore, $\nu_{2,r_i} > 0$ for all $r_i \notin R_S'$, otherwise the followers of class 2 selecting $r_z$ would have an incentive to switch to $r_i$, paying less than $\frac{2XK}{s_i} \leq 2XK$.
	Now, let us fix any $r_i \in R_S'$.
	Say $\sigma_\ell(r_i) < \frac{s_i}{2XK}$, then the followers of class 2 using $r_z$ would deviate to $r_i$, paying $\frac{2XK}{s_i} (1 - \sigma_\ell(r_i)) + C_{r_i,f} \sigma_\ell(r_i) < 2XK$.
	Moreover, say $\sigma_\ell(r_i) > \frac{s_i}{2XK}$, then the follower of class 1 on $r_i$ would deviate to $r_w$, paying $1 < \frac{2XK}{s_i} \sigma_\ell(r_i)$.
	As a result, $\sigma_\ell(r_i) = \frac{s_i}{2XK}$ for every $r_i \in R_S'$.
	Since $\sum_{r_i \in R_S} \sigma_\ell(r_i) \leq \frac{1}{2K}$, we also have $\sum_{r_i \in R_S'} \sigma_\ell(r_i) = \frac{1}{2XK} \sum_{r_i \in R_S'} s_i \leq \frac{1}{2K}$, implying $ \sum_{r_i \in R_S'} s_i \leq X$.
	It must also be the case that $\sigma_\ell(r_i) = 0$ for all $r_i \notin R_S'$.
	If not, then there would be $r_j \in R_S'$ with $\nu_{2,r_j} \in \{1,2\}$ and $\sigma_\ell(r_j) >0$, which implies that $c_\ell^\sigma = X^4 \sigma_\ell(r_j) + \sum_{r_i \neq r_j \in R_S}  C_{r_i,\ell} \sigma_\ell(r_i) > \sum_{r_i \in R_S'} \frac{2X - s_i}{K} = 2X -  \frac{1}{K} \sum_{r_i   \in R_S'} s_i \geq 2X - \frac{X}{K} $, a contradiction.
	Finally, $c_\ell = \sum_{r_i \in R_S'} C_{r_i,\ell} \sigma_\ell(r_i) = 2X - \frac{1}{K} \sum_{r_i \in R_S'} s_i \leq 2X - \frac{X}{K}$, which implies $ \sum_{r_i \in R_S'} s_i \geq X$.
	Thus, $ \sum_{r_i \in R_S'} s_i = X$.
	Letting $S' = \{s_i \in S \mid r_i \in R_S' \}$, we have $|S'|=K$ and $ \sum_{s_i \in S'} s_i = X$.
	%
	%
\end{proof}

\section{Additional Experimental Results}\label{sec:app_b}

Figure~\ref{fig:exp_full} reports all the running times of our formulations on the game instances composing our testbed.

Moreover, Tables~\ref{tab:types}~and~\ref{tab:non_sing} show average optimality gaps for the game instances where a feasible solution is found within the time limit.
The optimality gap is computed as $\frac{UB- LB}{UB}$, where $UB$ is the value of the best solution found by the solver within the time limit and $LB$ is the lower-bound returned by the solver.
We only provide the results for the largest game instances.
Specifically, Table~\ref{tab:types} reports the results for Formulation~\eqref{eq:milp_dif_types} on random $\mathcal{T}$-class SSCGs with $n_t = r$, while Table~\ref{tab:non_sing} shows the results for Formulation~\eqref{eq:milp_general} on random SCGs with $n = 3\,r$.

\begin{table}[!ht]
	\centering
	{\renewcommand{\arraystretch}{1.1}\setlength{\tabcolsep}{2pt}\begin{tabular}{r|c|c|c|c|c|}
		$T \, \backslash \, r$ & 10 & 30 & 50 & 70 & 90 \\ \hline
		1 & 0 (100) & 0.00 (100) & 0.00 (100) & 0.08 (100) & 0.35 (100) \\ \hline
		2 & 0 (100) & 0.00 (100) & 0.41 (100) & 0.51 (100) & 0.81 (70) \\ \hline
		3 & 0 (100) & 0.17 (100) & 0.79 (100) & 0.93 (100) & 0.95 (70) \\ \hline
		4 & 0 (100) & 0.43 (100) & 0.90 (90) & 0.97 (45) & 0.99 (10) \\ \hline
	\end{tabular}}
	%
	\caption{Avg. optimality gaps (\% of games where feasible solution found) of Form.~\eqref{eq:milp_dif_types} on random $\mathcal{T}$-class SSCGs ($n_t = r$).}
	\label{tab:types}
\end{table}
\begin{table}[!ht]
	\centering
	{\renewcommand{\arraystretch}{1.1}\setlength{\tabcolsep}{2pt}\begin{tabular}{r|c|c|c|c|c|}
		$|a_p| \, \backslash \, r$ & 5 & 10 & 15 & 20 & 25 \\ \hline
		1 & 0 (100) & 0.00 (100) & 0.15 (100) & 0.63 (100) & 0.63 (100) \\ \hline
		2 & 0 (100) & 0.25 (100) & 0.90 (100) & 0.95 (95) & 0.98 (40) \\ \hline
		3 & 0 (100) & 0.76 (100) & 0.98 (80) & 0.99 (10) & 0.99 (10) \\ \hline
		4 & 0 (100) & 0.82 (100) & 0.99 (35) & -- (0) & -- (0) \\ \hline
		5 & 0 (100) & 0.88 (100) & 0.99 (5) & -- (0) & -- (0) \\ \hline
	\end{tabular}}
	%
	\caption{Avg. optimality gaps (\% of games where feasible solution found) of Formulation~\eqref{eq:milp_general} on random SCGs ($n = 3r$).}
	\label{tab:non_sing}
\end{table}

\begin{figure*}[!htp]
	\begin{minipage}{4.3cm}
		\includegraphics[scale=.44]{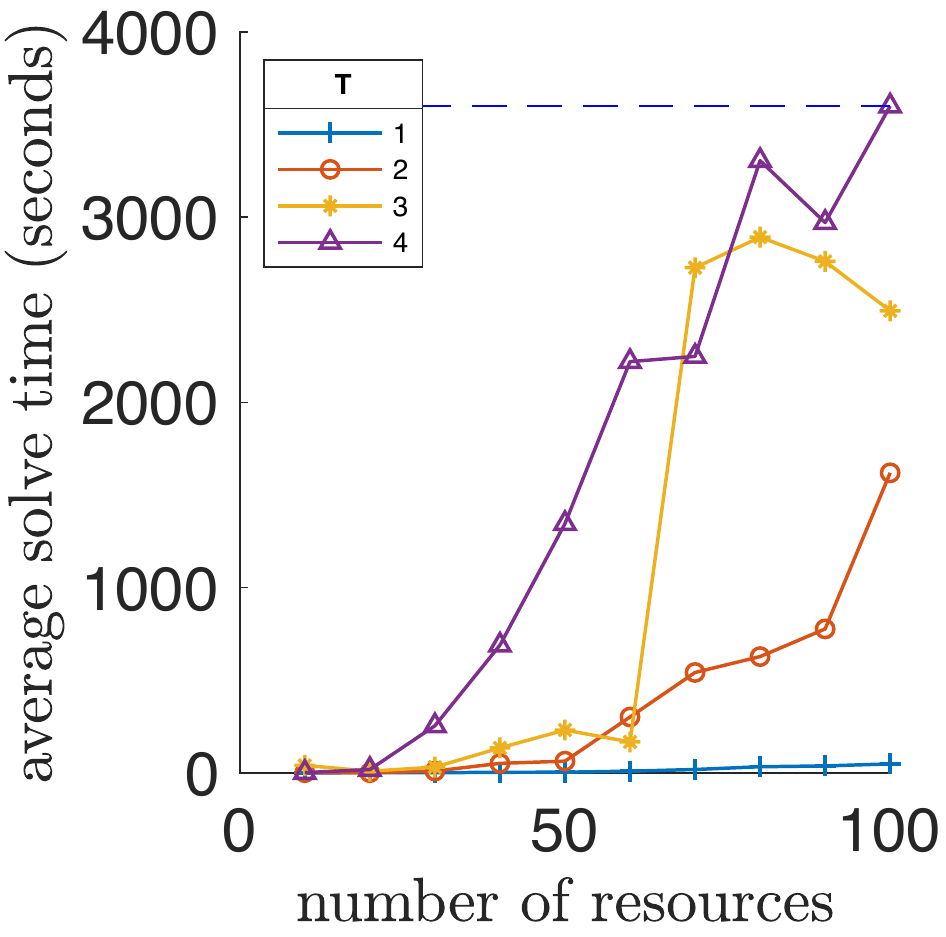}
		\subcaption{$\mathcal{T}$-class SSCGs $(n_t = 0.2\,r)$}
	\end{minipage}
	\begin{minipage}{4.3cm}
		\includegraphics[scale=.44]{figure2_types.pdf}
		\subcaption{$\mathcal{T}$-class SSCGs $(n_t = 0.5\,r)$}
	\end{minipage}
	\begin{minipage}{4.3cm}
		\includegraphics[scale=.44]{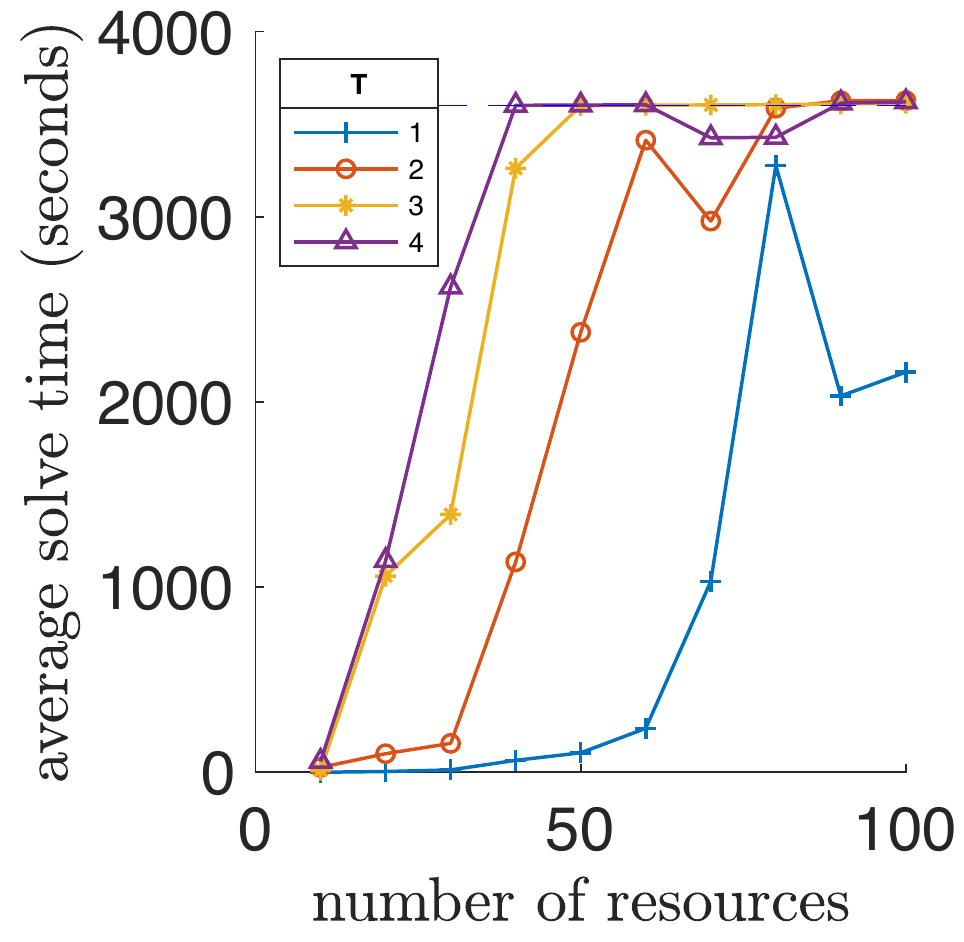}
		\subcaption{$\mathcal{T}$-class SSCGs $(n_t = r)$}
	\end{minipage}
	\begin{minipage}{4.3cm}
		\includegraphics[scale=.44]{figPartition.pdf}
		\subcaption{Worst-case $\mathcal{T}$-class SSCGs}
	\end{minipage}
	\begin{minipage}{4.3cm}
		\includegraphics[scale=.44]{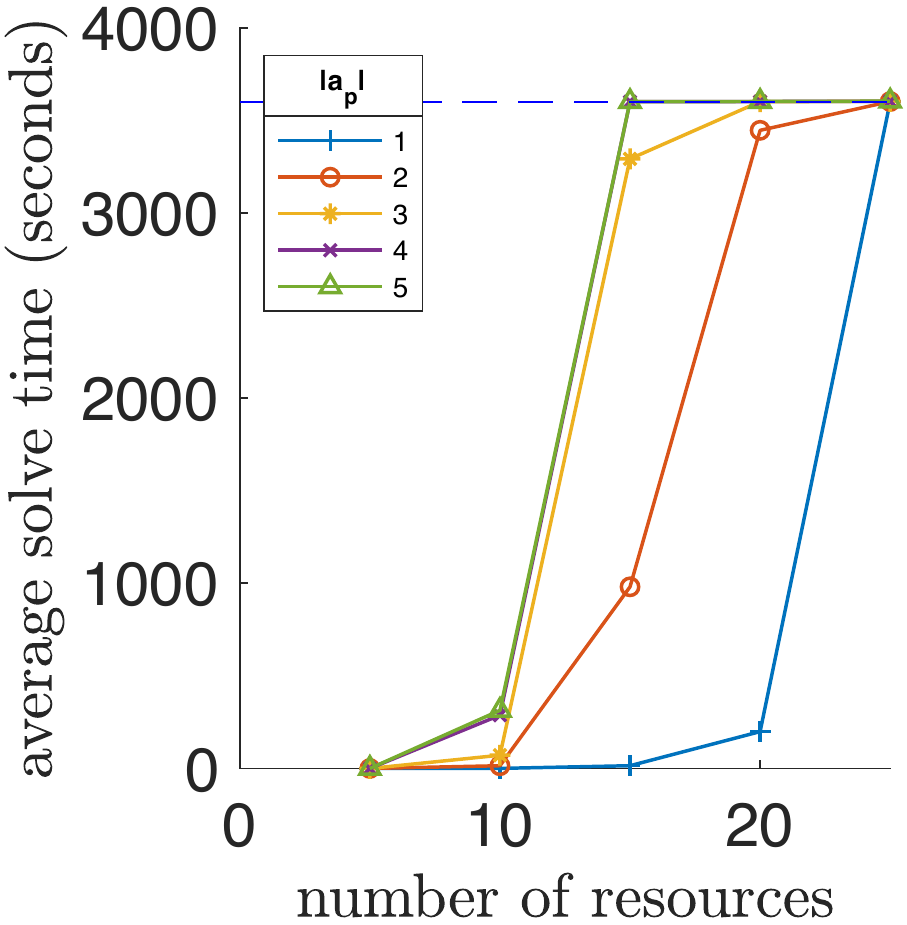}
		\subcaption{ SCGs $(n = r)$}
	\end{minipage}
	\begin{minipage}{4.3cm}
		\includegraphics[scale=.44]{figure2_non_sin.pdf}
		\subcaption{SCGs $(n = 2\,r)$}
	\end{minipage}
	\begin{minipage}{4.3cm}
		\includegraphics[scale=.44]{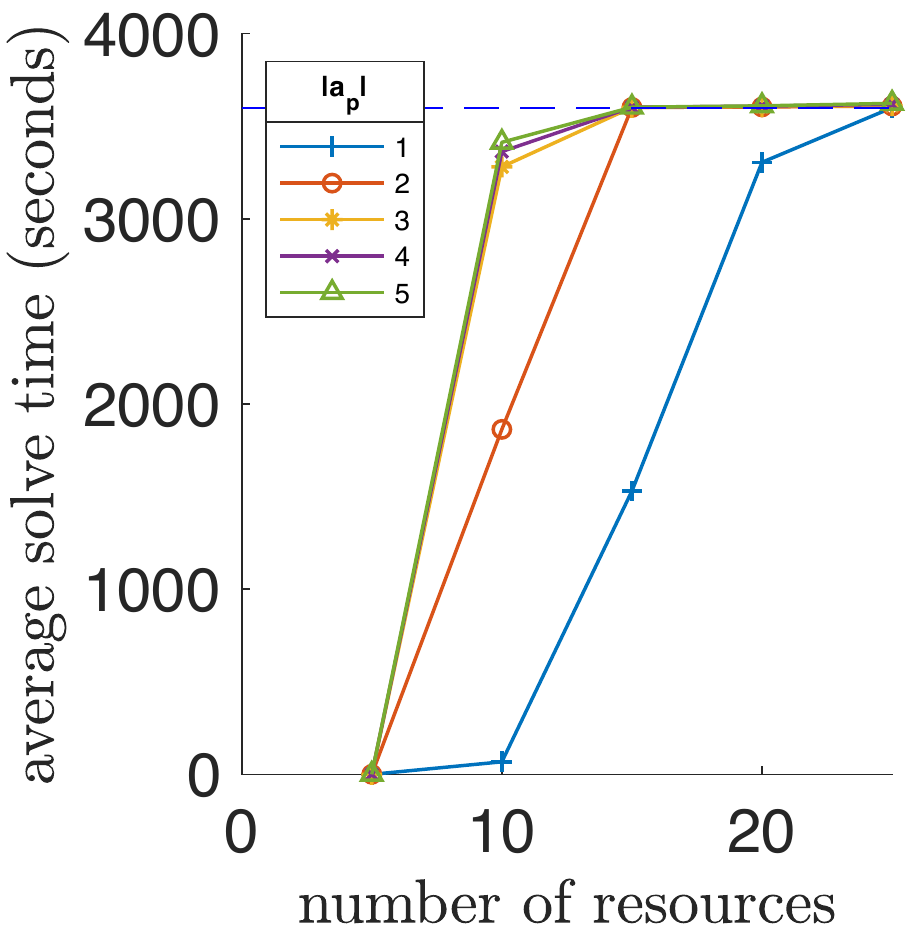}
		\subcaption{SCGs $(n = 3\,r)$}
	\end{minipage}
	\begin{minipage}{4.3cm}
		\includegraphics[scale=.44]{fig3sat.pdf}
		\subcaption{Worst-case SCGs}
	\end{minipage}	\caption{Computing times of Formulations~\eqref{eq:milp_dif_types}~and~\eqref{eq:milp_general} on randomly generated game instances and worst-case instances built on the base of our hardness reductions, as described in Section~\ref{sec:experiments}.}
	\label{fig:exp_full}
\end{figure*}

\end{document}